\documentclass[11pt]{article}
\usepackage[a4paper]{geometry}
\geometry{textwidth=17cm,inner=1.5cm,top=4cm,textheight=20cm}
\usepackage[english]{babel}

\usepackage{amsmath, accents}
\usepackage{amsfonts}
\usepackage{amstext}
\usepackage{amssymb}
\usepackage{amsthm}
\usepackage{amscd}
\usepackage{mathrsfs}

\usepackage{fancyhdr}
\pagestyle{fancy}
\lhead{}
\rhead{}
\rfoot{\footnotesize{\blue{If available, please cite the published version}}}

\usepackage[pagebackref,draft=false]{hyperref}
\hypersetup{colorlinks,
linkcolor=myrefcolor,
citecolor=mycitecolor,
urlcolor=myurlcolor}

\usepackage[capitalize]{cleveref}
\usepackage{cite}
\usepackage{caption}
\usepackage{etaremune}

\usepackage{xcolor}
\definecolor{myurlcolor}{rgb}{0,0,0.4}
\definecolor{mycitecolor}{rgb}{0,0.5,0}
\definecolor{myrefcolor}{rgb}{0.5,0,0}
\usepackage{graphicx}
\usepackage{tikz}
\usepackage{tikz-cd}
 \usetikzlibrary{decorations.pathmorphing}

\usepackage{etoolbox}
\usepackage{makeidx}
\usepackage{sectsty}
\usepackage{enumitem} 
\usepackage[]{latexsym}
\usepackage{braket}
\usepackage{caption}
\usepackage[utf8]{inputenx}
\usepackage[T1]{fontenc}
\usepackage{lmodern}
\usepackage{textcomp}
\usepackage{microtype}
\usepackage{totcount}
\usepackage{blindtext}


\newtheorem{theorem}{Theorem}

\newtheorem{proposition}{Proposition}

\newtheorem{lemma}{Lemma}

\newtheorem*{proof*}{Proof}


\newcommand{\be}{\begin{equation}}
\newcommand{\ee}{\end{equation}}
\newcommand{\bea}{\begin{eqnarray}}
\newcommand{\eea}{\end{eqnarray}}

\newcommand{\blue}[1]{\textcolor{blue}{{#1}}}


\newcommand{\ra}{\rightarrow}

\newcommand{\lra}{\longrightarrow}


\newcommand{\hh}{\mathcal{H}}


\newcommand{\valg}{\mathcal{V}(\mathbf{G})}
\newcommand{\valga}{\mathcal{V}(\mathbf{G}_{A})}
\newcommand{\valgxa}{\mathcal{V}(\mathbf{G}_{x_{A}})}
\newcommand{\valgb}{\mathcal{V}(\mathbf{G}_{B})}
\newcommand{\valgab}{\mathcal{V}(\mathbf{G}_{A}\times\mathbf{G}_{B})}


\title{Schr\"{o}dinger's problem with cats: measurements and states in the Groupoid Picture of Quantum Mechanics}

\author{F. M. Ciaglia$^{1,5}$  \href{https://orcid.org/0000-0002-8987-1181}{\includegraphics[scale=0.7]{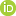}}, F. Di Cosmo$^{2,3,6}$ \href{https://orcid.org/0000-0003-0256-5913}{\includegraphics[scale=0.7]{ORCID.png}}, A. Ibort$^{2,3,7}$\href{https://orcid.org/0000-0002-0580-5858}{\includegraphics[scale=0.7]{ORCID.png}}, G. Marmo$^{4,8}$\href{https://orcid.org/0000-0003-2662-2193}{\includegraphics[scale=0.7]{ORCID.png}}\\
\footnotesize{$^{1}$\textit{ Max Planck Institute for Mathematics in the Sciences, Leipzig, Germany}} \\
\footnotesize{$^{2}$\textit{ ICMAT, Instituto de Ciencias Matem\'{a}ticas (CSIC-UAM-UC3M-UCM)}}  \\
\footnotesize{$^{3}$\textit{ Depto. de Matem\'aticas, Univ. Carlos III de Madrid, Legan\'es, Madrid, Spain}}  \\
\footnotesize{$^{4}$\textit{ Dipartimento di Fisica ``E. Pancini'', Universit\`a di Napoli Federico II, Napoli, Italy}} \\
\footnotesize{$^{5}$\textit{ e-mail: \texttt{florio.m.ciaglia[at]gmail.com}}}, \\
\footnotesize{ $^{6}$\textit{ e-mail: \texttt{fabio[at]}}} \\ 
\footnotesize{ $^{7}$\textit{ e-mail: \texttt{albertoi[at]math.uc3m.es}}} \\ 
\footnotesize{$^{8}$\textit{ e-mail: \texttt{marmo[at]na.infn.it}}}}

\date{}

\begin{document}

\maketitle

\begin{abstract}
 Schr\"{o}dinger's famous {\itshape Gedankenexperiment} involving a cat is used as a motivation to discuss the evolution of states of the composition of classical and quantum systems in the groupoid formalism for physical theories introduced recently.  
It is shown that the notion of classical system, in the sense of Birkhoff and von Neumann, is equivalent, in the case of systems with a countable number of outputs, to a totally disconnected groupoid with Abelian von Neumann algebra.
In accordance with Raggio's theorem, the impossibility of evolving the product state of a composite system made up of a classical and a quantum one into an entangled state by means of a unitary evolution which is internal to the composite system itself is proved in the groupoid formalism.
\end{abstract}

\tableofcontents

\thispagestyle{fancy}

\section*{Remark}

A \href{https://www.mdpi.com/1099-4300/22/11/1292}{slightly different version} this work will  has  appeared in Entropy. 
The main difference is that in the version published in Entropy no mention is made of Schr\"{o}dinger's cat Gedankenexperiment. 
This choice was made to comply with the suggestions of one Referee and of the Editors.
However, we also decided to make available on arXiv  the version of the paper in which we motivate the connection between our work and the Schr\"{o}dinger's cat Gedankenexperiment, when the latter is considered as referring to entanglement production during the unitary evolution of  the composition of a quantum system with a classical system (e.g., the cat in Schr\"{o}dinger's original paper), and not as referring to the issue of the Quantum Measurement Problem. Of course, comments and suggestions are more than welcome!

\section{Introduction}\label{sec.1:Introduction}

Erwin Schr\"{o}dinger shared with Einstein an enormous puzzlement about the implications of the laws that were being uncovered in the investigation of atomic processes.
In their {\itshape Gedankenexperiment} \cite{E-P-R-1935}, Einstein, Podolski and Rosen showed the conflicting relation between ``Elements of Physical Reality'' and the notions of separability and independence in Quantum Mechanics (see the recent analysis of such situation in the recent paper \cite{C-DC-I-M-2020}).
Schr\"{o}dinger showed his bewilderment in a series of reflections summarized by his famous experiment involving a macroscopic body (a cat) and a quantum system \cite{Schrodinger-1935}, where he argued about the conflict between ``common sense'' and what we now refer to as an entangled state between a cat and some radioactive material.

To be more precise, Schr\"{o}dinger considered a physical system consisting of a cat, described by a quantum state defined as a (normalized) vector in a Hilbert space $\mathcal{H}_{cat}$. 
For the sake of simplicity, we will assume that there are two states, say $|A\rangle$ and $|D\rangle$ in   $\mathcal{H}_{cat}$, representing the fact that the cat may be alive or dead, respectively. 
The cat is put into a box containing a small quantity of radioactive material, so small that it could emit just a particle during a period of time long enough. 
Inside the box there is a detector which is activated by the emission of one particle by the radioactive material. 
When the detectors is triggered, it activates a device that releases a toxin that kills the cat almost immediately.
We will describe the quantum state of the radioactive material as either the vector $|+\rangle$ or $|-\rangle$ in some Hilbert space $\mathcal{H}_R$, depending if it is in an excited state or it has emitted a particle, respectively. 
Consequently, according to most of the standard interpretations of Quantum Mechanics, the Hilbert space of the composite system is given by the tensor product $\mathcal{H}_{cat}\otimes\mathcal{H}_{R}$.

The alive cat is put in the box together with the radioactive material in the excited (non-emitting) state, that is, the initial state of the system is given by
\be\label{eqn: schrodinger's cat state}
|\psi_{0}\rangle\,=\,|A\rangle\,\otimes\,|+\rangle\,,
\ee
and the composite system is left to evolve according to its isolated dynamical evolution.
The state of the system at time $t$ is given by
\begin{equation*}
    |\psi_t \rangle = U_t |\psi_0 \rangle\,,
\end{equation*}
where $U_t$ is the unitary operator on $\mathcal{H}_{cat} \otimes \mathcal{H}_R$ determined by the specific interactions between all systems involved, including the dynamics of the radioactive material itself.
The dynamical evolution of the system may induce the radioactive material to emit or not, and since the Hilbert space of the total system is a tensor product, this instance reflects in the possibility that the dynamical evolution leads to an entangled state $|\psi_{t}\rangle$ at time $t$, which is described, for instance, by the vector  
\be\label{eqn: alive-dead entangled state}
|\psi_{t} \rangle = \frac{1}{\sqrt{2}}\left( |A\rangle \otimes |+\rangle + |D\rangle \otimes|- \rangle \right) \,,
\ee
and which represents a superposition of the alive cat with the dead cat.
Schr\"{o}dinger, of course, considered this possibility to be absurd. 
However, later on, such entangled states for composite systems made of macroscopic systems and microscopic ones were actually constructed and provided a further verification of the quantum mechanical nature of physical reality (GHZ state) \cite[Ch. 24]{Auletta-2001}, and contributed to promote entanglement as a foundational basis for modern quantum information technologies. 

To resolve the contradiction arising from our intuition about Schr\"{o}dinger's cat and the experimental verification of the existence of entangled states, we must realize that there are some subtle issues concerning the previously described situation that  deserve further analysis. 
For instance, Schr\"{o}dinger identified the notion of ``macroscopic'' system with the notion  of ``classical'' system, while, today, we know that these two notions need not coincide in general.
Then, we must note  that the experimental construction of entangled states is in general not a trivial matter, and this is why entangled states are considered precious resources in the modern theories known as ``resource theories'' \cite{C-G-2019}.
Therefore, we should rephrase Schr\"{o}dinger's original objection as: ``\textit{Starting with a composite system   made up of a \textbf{classical} and a   \textbf{quantum} part, and which is in a product state, is it possible to build an \textbf{entangled state} by means of a unitary evolution of the system?}''

In the context of the theory of von Neumann algebras, Raggio's theorem \cite{Raggio-1981} clearly shows that this is not true.
In this contribution, we will analyse this question in the groupoid  formalism developed recently in \cite{C-DC-I-M-2020,C-DC-I-M-02-2020,C-I-M-2018,C-I-M-02-2019,C-I-M-03-2019,C-I-M-05-2019}.
We stress that we do not address the aspects of the Schr\"{o}dinger cat paradox which are related with the quantum measurement problem \cite{Maudlin-1995}, but leave this investigation to future works. 
The result presented here will be that, in accordance with Raggio's theorem, it is actually impossible to build an entangled state out of a product state of a classical and a quantum system.
This result will crucially depend on the precise meaning of some of the words involved in the previous discussion.
Specifically,  we need to carefully define what we mean by ``classical'' and ``quantum'' system, their ``composition'', and the ``entangled states'' of the composite system.
In the following sections, we will elaborate on the meaning of these notions in the groupoid formalism, and we will also present an analysis of the logic approach of Birkhoff and von Neumann, in order to show that the notion of classical system introduced in this formalism coincides with the notion of classical system in the groupoid formalism and is connected with the commutativity of the groupoid algebra of the system.
Moreover, the formulation of the problem in the groupoid formalism will allow to present an example which escapes Raggio's theorem in the sense that, even if both algebras of the systems are non-Abelian, there are separable states which are sent to separable states by all unitary dynamics of the composite system.



\section{Birkhoff-von Neumann's algebra of propositions}\label{sec:propositions}

In this section, we will briefly recall the propositional calculus of Birkhoff and von Neumann \cite{B-vN-1936} \cite[Ch. 5]{Jauch-1952}.
This will serve as a mean to investigate the relation between this formalism and the groupoid formalism in the next section, and to show that the notion of classical system, which will be of crucial importance in the analysis of Schr\"{o}dinger's {\itshape Gedankenexperiment}, ``coincides'' in the two formalisms in the sense explained in Theorem \ref{thm: equivalence between classical systems in the groupoid formalism and the propositional calculus}.

Einstein and Schr\"{o}dinger were not the only ones having difficulties with the foundational aspects of Quantum Mechanics. Indeed, also J. von Neumann  had his  own.
His work had an enormous influence because, while providing a solid mathematical background for Quantum Mechanics (and setting the foundations of the theory of operators algebras), with the aim of clarifying the situation \cite{von-Neumann-1955}, he introduced the idealization of two basic physical operations: measurements and compositions. 
Without entering here on a discussion on von Neumann's influence on both notions and our interpretation of them, we would like to point out that, even prior to those, the notion of state worried von Neumann the most.
Therefore, we will concentrate on this concept in the present analysis, not because the other two notions are germane to this discussion, but because it will clarify some of the foundational aspects of the groupoid picture of Quantum Mechanics that were left open in \cite{C-I-M-05-2019}. 

In order to provide a sound mathematical foundation for Quantum Mechanics  beyond the Hilbert space formalism, J. von Neumann together with G. Birkhoff   investigated the logical structure of Quantum Mechanics.
They took an operational point of view that could be summarised in the statement that a system is described by the results of the interrogations we put on it. 
Thus, they devised a logical system whose  propositions $\mathcal{P}$ represent  experiments performed on the system with just two possible outcomes  (Yes-No experiments); thus, the system could pass the experiment (unchanged), or it does not pass. 
There is a natural relation between propositions, denoted $ P \subseteq Q $, meaning that if  the system passes $P$ it will also pass $Q$. 
As we will see later on, this way of thinking about propositional calculus becomes quite close to Schwinger's algebra of selective measurements. 

After this, a set of axioms, based on their clarity and physical meaning, was proposed characterizing  the families of propositions $(\mathcal{P}, \subseteq)$ describing physical systems.
Specifically, these axioms are:
\begin{enumerate}
    \item the relation $\subseteq$ is a partial order;
    \item $\mathcal{P}$ contains two special propositions: $E$, which always allows the passage of the system (tautology), and $\emptyset$, which never permits the passage of the system (absurd); 
    \item the pair $(\mathcal{P}, \, \subseteq )$ is a lattice, i.e., given $P, Q \in \mathcal{P}$, there exists the least upper bound of $P$ and $Q$, denoted as $P\cup Q$, and the greatest lower bound, $P \cap Q$; 
    \item the lattice $(\mathcal{P}, \, \subseteq )$ is orthocomplemented, i.e., for any $P\in \mathcal{P}$, there is another proposition $P'$ such that:
    \begin{enumerate}
        \item $\emptyset' = E$, and $E' = \emptyset$;
        \item $(P')'=P$, $P\cap P'= \emptyset$, $P\subseteq Q \Leftrightarrow Q' \subseteq P'$;
        \item $(P\cup Q)' = P' \cap Q'$.
    \end{enumerate}
\end{enumerate}

Finally, it was argued that lattice $(\mathcal{P}, \subseteq, \cup, \cap, ')$ is  weakly modular, i.e., if $P\subseteq Q$, then $P$ and $Q$ are compatible, which means that they generate a Boolean sublattice of $\mathcal{P}$ (see the discussion in \cite[Ch. 5]{Jauch-1952}). 
In addition to the previous set of axioms, a Boolean lattice satisfies also the distributive law encoded in the relations
\be\label{eqn: Boolean lattice}
\begin{split}
P\cup ( Q \cap R) & = (P \cup Q) \cap (P \cup R) \\
P \cap (Q \cup R) &= (P \cap Q) \cup (P \cap R).
\end{split}
\ee
Sometimes the previous list of properties of $\mathcal{P}$ are supplied with an atomicity condition which is not relevant for our discussion.
In \cite{B-vN-1936}, it was proved that the lattice of propositions of a quantum system determines a projective geometry and propositions can be represented as orthogonal projectors in a Hilbert space, i.e.,  the standard picture of Quantum Mechanics in terms of Hilbert spaces is recovered (see also \cite[Ch. 8]{Jauch-1952}). 

The propositional calculus discussed so far allows for a clear interpretation of the notion of state of a quantum system where its statistical interpretation becomes apparent. 
A state of a quantum system described by a lattice of propositions $\mathcal{P}$ is a map $p\, \colon \, \mathcal{P} \, \rightarrow \, \mathbb{R}_+$ satisfying the following axioms:
\begin{itemize}
    \item $p(\emptyset) = 0$ and $p(E)=1$;
    \item $p(P\cup Q) = p(P) + p(Q)$, if $P\cap Q = \emptyset$\,.
\end{itemize}
If the lattice  is closed under joint of countable proposition these axioms can be extended to their $\sigma$-counterpart.

Hence, because of the additivity property of the function $p$, a statistical interpretation of the quantity $p(P)$, as in standard probability theory, as the statistical frequency that the system will pass the experiment $P$ in a large series of trials, can be introduced. 
This constitutes, actually, the formalization of the notion of ``Gesamtheit'' previously introduced by von Neumann \cite{von-Neumann-1927} to analyze the statistical and thermodynamical properties of quantum systems.  
The previous analysis was completed by proving that
\begin{equation}
    p(P) = \mathrm{Tr}(\hat{\rho} \hat{P})\,,
\end{equation}
where $\hat{P}$ is the orthogonal projection associated with $P$, in the Hilbert space $\mathcal{H}$ representing the lattice $\mathcal{P}$, and $\hat{\rho}$ is a density operator, i.e., a mixed state, or just a `quantum state' in modern jargon. 

\section{The Groupoid formalism for physical systems}

\subsection{Schwinger's selective measurements and groupoids}

One of the main difficulties with the propositional calculus of Birkhoff and von Neumann, was its explicit construction in specific examples. 
Actually, in spite of his foundational soundness, the propositional calculus developed by Birkhoff and von Neumann was seldomly used to develop the theoretical foundations of the avalanche of experimental data that were obtained at that time (contrary to Dirac's (often formal) methods that were so much easier to use). 

Almost 20 years later, J. Schwinger produced an axiomatic description of the foundations of quantum mechanics that was somehow related to the propositional calculus of Birkhoff and von Neumann even if no reference was made of the latter.
Schwinger's main idea was to look for a ``symbolic language suitable to describe atomic phenomena'' \cite{Schwinger-2000}. 
He was able to find the building blocks of such language in the notion of ``selective measurement'', which can be considered a sort of enrichment of the notion of proposition or Yes-No experiment of the propositional calculus.
In particular, denoting with $A$ a physical quantity that has the possible values $(a,a',a'',....)$, the selective measurement $M(a)$ was conceived as a specific physical apparatus that measures $A$ and selects only those physical systems compatible with the value $a$ (think, for instance, of a Stern-Gerlach device selecting only the up-beam).
This notion was then extended to consider devices capable of accepting systems compatible with the value $a$ and changing them into systems compatible with  the value $a'$. 
This type of selective measurement is denoted by $M(a',a)$, and we clearly see that if $A$ has only two possible values, then $M(a,a)$ is a proposition in  von Neumann's sense.
However, and this is precisely the main contribution in Schwinger's conceptualization, instead of considering the family of selective measurements $M(a',a)$ as a static kinematical framework, Schwinger added a dynamical interpretation to it by realizing that such selective measurements may be composed in a natural way by performing them one after the other according to the composition rule
\be\label{eqn: composition of selective measurements}
M(a'', a')\,\circ\, M(a',a) \,=\, M(a'', a)\,.
\ee
Then, based on this simple rule, and a few additional natural axioms, Schwinger built  an algebra, called the algebra of selective measurements, and to put it to work he enriched it with a quantum dynamical principle that allowed him to successfully complete his analysis of quantum electrodynamics \cite{Sc51-1,Sc51-2,Sc51-3,Sc51-4,Sc51-5,Sc51-6}. 

When Schwinger introduced this theoretical framework, the mathematical structure encoded in equation \eqref{eqn: composition of selective measurements} was not known to him.
However, from a modern perspective, it is immediate to check that the symbolic composition rule given in equation \eqref{eqn: composition of selective measurements} satisfies the following axioms:
\begin{itemize}
    \item Associativity: $M(a''', a'')\circ \left( M(a'', a') \circ M(a' , a) \right) = \left( M(a''', a'') \circ M(a'' , a') \right) \circ M(a', a)$;
    \item Units: $M(a',a')\circ M(a',a) = M(a',a)$ and $M(a',a)\circ M(a,a) = M(a',a)$;
    \item Inverse: $M(a,a')\circ M(a',a) = M(a,a)$ and $M(a',a)\circ M(a,a') = M(a',a')$.
\end{itemize}
This implies that the collection of all selective measurements $\left\lbrace M(a',a) \, | \, a,\,a' \, \mbox{ outcomes of } A \right\rbrace$ form a groupoid with space of objects the measurements $M(a,a)\equiv\mathbf{1}_a$, and Schwinger's algebra of selective measurements turns out to be the groupoid algebra of this groupoid \cite{Renault-1980}.

Starting from the observation that selective measurements are appropriately described by groupoids, in a series of recent papers \cite{C-DC-I-M-2020,C-DC-I-M-02-2020,C-I-M-2018,C-I-M-02-2019,C-I-M-03-2019,C-I-M-05-2019}, a new picture of Quantum Mechanics  has been proposed    where Schwinger's algebra of selective measurements was taken one step forward.
In this framework, a physical system is described by means of a groupoid $\mathbf{G}\rightrightarrows \Omega$, where the set $\Omega$ is referred to as the space of ``outcomes'', and the elements of the groupoid $\alpha \, : \, x \, \rightarrow \, y$, with $x,\,y \in \Omega$, $x$ being the source and $y$ the target of $\alpha$, are referred to as ``transitions''. 
In the following, we will use either the notation $\mathbf{G}\rightrightarrows \Omega$ or just $\mathbf{G}$, to denote the groupoid of a physical system.

Conceptually speaking, the family of transitions $\alpha \in \mathbf{G}$ generalize both Schwinger's notion of selective measurement previously discussed, the actual transitions used in the statement of the Ritz-Rydberg combination principle \cite{Connes-1994},   and the experimental notion of `quantum jumps' introduced in the old Quantum Mechanics. 
Furthermore, from a modern perspective, we may also say that the transitions $\alpha\,:\,x \, \rightarrow \, y$ represent the abstract notion of amplitudes as ``square roots'' of probability densities as argued in \cite{C-DC-I-M-02-2020}, that is, specific representations of the groupoid  $\mathbf{G}$ will assign rank-one operators to the transtions $\alpha$, that will represent the `square roots' of standard probabilities. 


\subsection{The algebra of transitions and the Birkhoff-von Neumann's algebra of propositions}

In this section we will see how to construct an algebra associated with the groupoid $\mathbf{G}\rightrightarrows \Omega$ describing a physical system, and we will analyze the relation between this groupoid algebra   and  the Birkhoff-von Neumann's algebra of propositions introduced before.
For the sake of simplicity, we will consider discrete, countable groupoids.
A similar construction is availabe for more general groupoids but requires some additional care to handle  functional analytical  details \cite{Renault-1980}.

First of all, starting from the groupoid $\mathbf{G}$, we may form the algebra $\mathbb{C}[\mathbf{G}]$ of formal linear combinations of transitions $\alpha \, \colon \, x\, \rightarrow \, y$, which is referred to as the algebra of transitions of the groupoid.
An element of this algebra will have the form  
\be\label{eqn: element of algebra of transitions}
A = \sum_{\alpha \in \mathbf{G}} A_{\alpha} \alpha, \mbox{ with }A_{\alpha}\in \mathbb{C} \, ,
\ee
(all $A_\alpha$'s zero except for a finite number of them) and the space will be equipped with a natural multiplicative law given by 
\be
    A B = \sum_{\alpha, \beta \in \mathbf{G}} A_{\alpha} B_{\beta}\,  \delta_{(\alpha, \beta)} \, \alpha \circ \beta\,,
\ee
where $\delta_{(\alpha, \beta)}$ is different from zero only if $s(\alpha) = t(\beta)$, $s$ and $t$ being the source and target map of the groupoid, respectively.
Clearly, the multiplication above is associative and, in addition, there is a natural involution operation, denoted by $\ast$ and given by 
\be
    A^{\ast} = \sum_{\alpha \in \mathbf{G}} \overline{A}_{\alpha} \alpha^{-1}\,.
\ee
The involution operation is such that $(AB)^{\ast}=B^{\ast}A^{\ast}$.
Elements of this algebra are referred to as ``virtual transitions'', and they provide the natural background for a dynamical and statistical interpretation of the theory (see \cite{C-I-M-03-2019,C-I-M-05-2019} for details).

On the other hand, the algebra of transitions also represents a bridge between the Groupoid Picture of Quantum Mechanics and the Birkhoff-von Neumann's propositional calculus because  idempotent elements in $\mathbb{C}[\mathbf{G}]$ are related to the notion of proposition in the Birkhoff-von Neumann's lattice of propositions $\mathcal{P}$ associated with a physical system. 
Indeed, given a groupoid $\mathbf{G}$, an element $P\in\mathbb{C}[\mathbf{G}]$ such that $P^{\ast} = P$ and $P^2 = P$, will be called a proposition of $\mathbf{G}$. 
Note that, in particular, the units $\mathbf{1}_{x}$ with $x\in\Omega$ are propositions. 
There is a natural partial order relation among propositions in $\mathbf{G}$ given by
\be
    P\subseteq Q \, \Leftrightarrow \, PQ = P\,.
\ee
Note that, if $P$ and $Q$ are propositions of $\mathbf{G}$, and $PQ = P$, then $PQ = P = P^* = Q^* P^* = QP$.
It is immediate to check that the previous relation  actually defines a partial order in the space of propositions of $\mathbf{G}$, that is, the relation $\subseteq$ thus defined, satisfies:
\begin{itemize}
    \item Reflexivity: $P^2 = P$ $\Rightarrow$ $P\subseteq P$;
    \item Transitivity: $P\subseteq Q\subseteq R$, i.e. $PQ=P$ and $QR=Q$, $\Rightarrow$ $PR=PQR=PQ=P$, i.e., $P\subseteq R$;
    \item Antisymmetry: $P\subseteq Q$ and $Q\subseteq P$, i.e. $PQ=P$ and $QP=Q$, $\Rightarrow$ $P=PQ=QP=Q$, i.e., $P=Q$. 
\end{itemize}
If the set $\Omega$ of outcomes is finite, the algebra of propositions has a unit  element $\mathbf{1}= \sum_{x\in \Omega} \mathbf{1}_x$, which may be interpreted as the proposition $E$ corresponding to ``truth''.
Then, we can define an orthocomplement operation $P\,\mapsto\,P'$ given by:
\be
P' := \mathbf{1}- P,
\ee
It is straightforward to check that:
\begin{itemize}
    \item $\mathbf{1}' = \emptyset = 0$; \quad $0' = \mathbf{1}$;
    \item $P\subseteq Q$ $\Leftrightarrow$ $Q'\subseteq P'$ (as it follows from: $Q'P'= (\mathbf{1}-Q)(\mathbf{1}-P) = \mathbf{1}-P-Q+PQ = \mathbf{1}-Q = Q' $).
\end{itemize}
The space of propositions of $\mathbf{G}$ would be a lattice if there exists the greatest lower bound (g.l.b.)  for any pair $(P,Q)$ of propositions, as it is the case if $\mathbf{G}$ is finite.
However, in the infinite dimensional case this is not necessarily so. 

In the context of $C^{*}$-algebra theory, it has been recently shown by  Marchetti and Rubele \cite{M-R-2007} that there is a particular type of $C^*$-algebras, named Baire*-algebras, or $B^*$-algebras for short, for which the lattice of projection is an orthocomplemented modular lattice. In particular von Neumann algebras are $B^*$-algebras. 
Therefore, given a groupoid $\mathbf{G}$, if we complete the algebra $\mathbb{C}[\mathbf{G}]$ of virtual transitions in such a way that it becomes a von Neumann algebra, then, its lattice of projections will satisfy the axioms of Birkhoff-von Neumann's propositional calculus.
Notice that, in this setting, the proposition (projection) $P\cap Q$ is obtained as 
\be\label{eqn: cap for propositions}
P\cap Q\,=\,\mathrm{s-}\lim_{n\rightarrow \infty} (PQ)^n \, ,
\ee
where s-$\lim_{n\rightarrow \infty} T_n$ denotes the limit of the sequence $T_n$ in the strong topology,  and it satisfies   
\be
P\cap P' = \mathrm{s-}\lim_{n\rightarrow \infty}(P(\mathbf{1}-P))^n = 0\,.
\ee
Consequently, setting
\be \label{eqn: cup for propositions}
P\cup Q \,:= \,(P'\cap Q')',
\ee
we get an othocomplemented lattice that will be denoted as $\mathcal{P}(\mathbf{G})$. 
Note that, in addition, this lattice is automatically weakly modular because
\be
P\subseteq Q \; \Rightarrow \, PQ=QP=P,
\ee
and the sublattice generated by $P$ and $Q$ is Boolean, that is, is such that 
\begin{equation}\label{eq:distributive}
P\cup (Q\cap P) = (P\cup Q)\cap P \quad \mbox{ and }\;\;P\cap (Q\cup P) = (P\cap Q)\cup P,
\end{equation}
as can be easily checked.

In the particular instance that the groupoid $\mathbf{G}$ is countable, as we shall always assume from now on, we may easily complete $\mathbb{C}[\mathbf{G}]$ to a von Neumann algebra $\valg$ associated with $\mathbf{G}$ as follows.
Consider the space $\mathcal{L}^2(\mathbf{G})$ of square integrable functions on $\mathbf{G}$ with respect to the counting measure. 
There is an algebra homomorphism  $\lambda$ between the algebra $\mathbb{C}[\mathbf{G}]$ of transitions of $\mathbf{G}$  and the algebra $\mathcal{B}(\mathcal{L}^2(\mathbf{G}))$ given by
\be
(\lambda(A)\Psi)(\beta) \,:=\,\sum_{\alpha\in\mathbf{G}}\, A_{\alpha}\,\delta_{(\alpha^{-1}, \beta)}\Psi (\alpha^{-1}\circ \beta)\, ,
\ee
that is, $(\lambda(A)\Psi)(\beta) =\,\sum_{\alpha\in\mathbf{G}}\, A_{\alpha}\,\Psi (\alpha^{-1}\circ \beta)$,
provided that $t(\alpha)=t(\beta)$, and zero otherwise.
The map $\lambda$ is called the left-regular representation of $\mathbb{C}[\mathbf{G}]$ (the notions of representation of a groupoid and of its algebra are connected with the notions of representation of a category and of its associated algebra as explained in \cite{I-R-2019}).
The representation $\lambda$ is faithful, and thus, in the following, we will often identify $\mathbb{C}[\mathbf{G}]$ with its image through $\lambda$.
This choice will simplify the notation.

We define the  algebra $\valg$  as 
\be
\valg\,:=\,\left(\lambda(\mathbb{C}[\mathbf{G}])\right)'',
\ee
that is, as the double commutant of $\lambda(\mathbb{C}[\mathbf{G}])$ inside $\mathcal{B}(\mathcal{L}^2(\mathbf{G}))$.
Because of von Neumann's theorem, the algebra $\mathcal{V}(\mathbf{G}) \subset \mathcal{B}(\mathcal{L}^2(\mathbf{G}))$ is a weakly closed and strongly-closed subalgebra of the algebra of bounded operators on the separable Hilbert space $\mathcal{L}^2(\mathbf{G})$, and thus it is a von Neumann algebra.
The identity operator in $\mathcal{V}(\mathbf{G})$ is denoted as $\mathbf{1}$, and it is not hard to see that $\mathbf{1}= \sum_{x\in \Omega} 1_x $. 
We refer to $\valg$ as the von Neumann algebra of the groupoid  $\mathbf{G}$.

Thus, we may conclude this section by saying that the set of propositions (projections) of the von Neumann algebra $ \mathcal{V}(\mathbf{G})$ of virtual transitions of a physical system determined by a countable groupoid $\mathbf{G}$ is an orthocomplemented, atomic, weakly-modular lattice of propositions in the sense of Birkhoff-von Neumann's propositional calculus.

\subsection{States in the groupoid picture}

As it was discussed in Sect. \ref{sec:propositions},   from the point of view of the algebra of propositions, a state of a physical system is a non-negative, normalized, $\sigma$-additive function $p$ on the lattice of propositions $\mathcal{P}$ of a physical system. 
Therefore, in the groupoid picture outlined above, a state $p$ will be a normalized non-negative real function on the lattice of propositions $\mathcal{P}(\mathbf{G})$ of the von Neumann algebra $\valg$ of the groupoid $\mathbf{G}\rightrightarrows \Omega$.
In other words, $\forall P \in \mathcal{P}(\mathbf{G})$, $p(\mathbf{1})=1$ and $p(P)\geq 0$. 
Moreover, since the lattice of propositions (projections) $\mathcal{P}(\mathbf{G})$ generates the total algebra $\valg$ (essentially because of the spectral theorem), we have that any real element $A \in \valg$  can be written as 
\be
A=\sum_{\lambda} \lambda P_{\lambda},
\ee
where $\left\lbrace P_{\lambda} \right\rbrace$ is the spectral resolution of $A$. Then $p(A) = \sum_{\lambda}\lambda p(P_{\lambda})$. If $A$ is also positive, i.e., $A=B^*B$, we will have $p(B^*B)= \sum_{\lambda}|\lambda|^2p(P_{\lambda}) \geq 0$. In other words, states in the sense of von Neumann are just states in the $C^*$-algebra $\valg$, i.e., normalized positive functionals. This justifies the notion of state in the groupoid picture introduced in \cite{C-I-M-05-2019} as normalized, positive functionals defined on the $C^*$-algebra of the groupoid $\mathbf{G}$.
The case in which $A$ has a continuous spectrum may be dealt with in a similar way, essentially ``replacing sums with integrals''.

In the case of a countable groupoid, every state $\rho$ determines a  function $\varphi_{\rho}\, :\, \mathbf{G} \, \rightarrow \, \mathbb{C}$ which is positive-definite and is given by 
\be
\varphi_{\rho}(\alpha) := \rho(\alpha)\,, \quad \forall \alpha \in \mathbf{G}\,.
\ee
Note that we have 
\be
\varphi_{\rho}(1_x)=\rho(1_x)= \rho(1_x^*1_x)\geq 0 ,
\ee
and 
\be
\sum_{x\in \Omega} \varphi_{\rho}(1_x) = \rho(\mathbf{1})=1 .
\ee
Hence, the non-negative real numbers, 
\be
p_x = \varphi_{\rho}(1_x) \quad \mbox{ with } x\in \Omega\,,
\ee
define a classical probability distribution on the space of outcomes  $\Omega$ of the system. 

As it was shown in \cite{C-I-M-05-2019},  any state $\rho$ on $\valg$, through its associated positive-definite function $\varphi_{\rho}$, defines a decoherence functional on the $\sigma$-algebra $\Sigma (\mathbf{G})$ of parts of $\mathbf{G}$ by means of
\be
D(\mathbf{A},\mathbf{B}) = \sum_{\substack{\alpha\in \mathbf{A}, \, \beta\in \mathbf{B}\\t(\alpha)=t(\beta)}} \varphi_{\rho}(\alpha^{-1}\circ \beta)\,, \quad \mathbf{A},\,\mathbf{B}\in \Sigma(\mathbf{G})\,.
\ee
The decoherence functional $D$, in turn, defines a quantum measure, or a grade-2 measure $\mu$ in Sorkin's conceptualization of the statistical interpretation of Quantum Mechanics \cite{Sorkin-1994}, as
\be
\mu (\mathbf{A})= D(\mathbf{A}, \mathbf{A})\,. 
\ee
The quantum measure $\mu$ nicely captures both the statistical interpretation of experimental observations (it can be interpreted as a statistical frequence) as well as interference phenomena, i.e., it is not additive in general (see \cite{C-I-M-05-2019} and references therein for a detailed discussion of these subtle aspects).
Therefore, in general we can define the interference function 
\be
I_2(\mathbf{A},\mathbf{B}):= \mu(\mathbf{A}\sqcup \mathbf{B}) - \mu(\mathbf{A})-\mu(\mathbf{B}) \neq 0\,,
\ee
where $\mathbf{A},\, \mathbf{B} \in \Sigma(\mathbf{G})$ are disjoint subsert of $\mathbf{G}$. 
However, a grade-2 measure satisfies the following identity 
\be
I_3(\mathbf{A},\mathbf{B},\mathbf{C}) := \mu(\mathbf{A}\sqcup \mathbf{B}\sqcup \mathbf{C}) - \mu(\mathbf{A}\sqcup \mathbf{B}) - \mu(\mathbf{A}\sqcup \mathbf{C}) - \mu(\mathbf{B}\sqcup \mathbf{C}) + \mu(\mathbf{A})+\mu(\mathbf{B})+\mu(\mathbf{C}) = 0\,,
\ee
with $\mathbf{A},\,\mathbf{B},\,\mathbf{C}\in \Sigma(\mathbf{G})$, three  pairwise disjoint subsets of $\mathbf{G}$. 

Given two outcomes $x,y \in \Omega$ and a state $\rho$ of the system, we can consider the complex number  
\be
\varphi_{y,x} = \sum_{\substack{\alpha:x\rightarrow y}} \varphi_{\rho}(\alpha)\,, 
\ee
which could be interpreted as the probability amplitude associated with the state $\rho$ and the events $x$ and $y$
Indeed, we have
\be
\begin{split}
|\varphi_{y,x}|^2 = \overline{\varphi}_{y,x}\varphi_{y,x} & = \sum_{\substack{\alpha:x\rightarrow y}}\overline{\varphi}_{\rho}(\alpha) \sum_{\substack{\beta:x\rightarrow y}}\varphi_{\rho}(\beta)= \\
& = \sum_{\substack{\alpha:x\rightarrow y \\ \beta:x\rightarrow y}}\varphi_{\rho}(\alpha^{-1})\varphi_{\rho}(\beta)\,.
\end{split}
\ee
Then, if the state $\rho$ is factorizable in the sense of  \cite{C-I-M-05-2019}, i.e., if its associated function $\varphi_{\rho}$ satisfies
\be\label{eqn: factorizable state}
\varphi_{\rho}(\alpha\circ \beta) = \varphi_{\rho}(\alpha)\varphi_{\rho}(\beta),
\ee
we obtain that
\begin{equation}
|\varphi_{y,x}|^2 = \sum_{\substack{\alpha:x\rightarrow y\\ \beta:x\rightarrow y}}\varphi_{\rho}(\alpha^{-1}\circ \beta)\,,
\end{equation}
which is the celebrated rule of ``sum-over-histories'' composition of the probability amplitudes discovered by R. Feynman. 

\section{Composition of classical and quantum systems}

\subsection{Classical systems}\label{sec:classical}

To analyze Schr\"{o}dinger's cat states, we must first consider with care the notion of classical system.
In Birkhoff-von Neumann's description of physical systems by means of propositional calculus, classical systems correspond to Boolean lattices.
More precisely, two propositions  $P$ and $Q$  are said to be compatible if the sublattice generated by them is Boolean \cite{Jauch-1952}, i.e., it satisfies the distributive law in Eq. \ref{eq:distributive}, and a system is classical  if all the propositions are compatible among themselves. 
On the other hand, in the groupoid picture introduced above, we   say that a given physical system is classical if the algebra $\valg$ is Abelian (commutative).

We will now prove that the notion of classical system  from the groupoid point of view  corresponds to the notion of classical system  from Birkhoff and von Neumann's quantum logic point of view and {\itshape viceversa}.

First, we will show that if the groupoid algebra $\valg$ is Abelian, then the lattice of propositions of $\mathbf{G}$ is Boolean.

\begin{proposition}\label{prop: countable groupoid with abelian VA has Boolean lattice of prop}
Let $\mathbf{G}\rightrightarrows \Omega$ be a countable groupoid with Abelian von Neumann algebra $\valg$, then its lattice of propositions $\mathcal{P}(\mathbf{G})$ is Boolean.
\end{proposition}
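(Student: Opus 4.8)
The statement is that for a countable groupoid $\mathbf{G}$ with Abelian von Neumann algebra $\valg$, the lattice of propositions $\mathcal{P}(\mathbf{G})$ is Boolean. Since $\mathcal{P}(\mathbf{G})$ was already shown earlier to be an orthocomplemented, weakly-modular lattice, the only thing missing is the distributive law \eqref{eqn: Boolean lattice}. My plan is to exploit commutativity to give explicit algebraic formulas for $P\cap Q$ and $P\cup Q$ in terms of the algebra operations, and then verify distributivity by direct computation.

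First I would observe that when $\valg$ is Abelian, any two projections $P,Q$ commute, $PQ=QP$. Then $PQ$ is itself a projection: $(PQ)^*=Q^*P^*=QP=PQ$ and $(PQ)^2=PQPQ=P^2Q^2=PQ$. Moreover $PQ\subseteq P$ and $PQ\subseteq Q$ since $(PQ)P=PQ$ and $(PQ)Q=PQ$; and if $R\subseteq P$ and $R\subseteq Q$ then $R(PQ)=(RP)Q=RQ=R$, so $R\subseteq PQ$. Hence $P\cap Q = PQ$. (This is consistent with formula \eqref{eqn: cap for propositions}, since $(PQ)^n = PQ$ for all $n$ when $PQ$ is idempotent, so the strong limit is just $PQ$.) Dually, using $P' = \mathbf{1}-P$ and the definition \eqref{eqn: cup for propositions}, $P\cup Q = (P'Q')' = \mathbf{1}-(\mathbf{1}-P)(\mathbf{1}-Q) = P+Q-PQ$.

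With these closed forms, distributivity is a routine algebraic check in the commutative $*$-algebra $\valg$. For instance,
\be
P\cap(Q\cup R) = P(Q+R-QR) = PQ + PR - PQR,
\ee
while
\be
(P\cap Q)\cup(P\cap R) = PQ + PR - (PQ)(PR) = PQ + PR - P^2QR = PQ+PR-PQR,
\ee
using $P^2=P$ and commutativity; the two agree. The other distributive identity, $P\cup(Q\cap R) = (P\cup Q)\cap(P\cup R)$, follows by the same kind of expansion, or simply by applying the orthocomplement to the one just verified, since $'$ is an involution that swaps $\cup$ and $\cap$. This establishes the distributive law, and together with the previously-noted orthocomplementation and (weak) modularity, shows $\mathcal{P}(\mathbf{G})$ is a Boolean lattice.

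I expect the only real subtlety — and hence the ``main obstacle'' — is justifying the closed-form expression $P\cap Q = PQ$ as genuinely the greatest lower bound \emph{in the lattice $\mathcal{P}(\mathbf{G})$ of all projections of $\valg$}, i.e., making sure the infinite-dimensional subtleties flagged just before the proposition (where g.l.b.'s need not exist in general, and $P\cap Q$ was only defined via the strong-limit formula \eqref{eqn: cap for propositions}) genuinely collapse to the naive product once everything commutes. This is handled by the elementary argument above showing $PQ$ is a projection dominated by both $P$ and $Q$ and dominating every common lower bound; commutativity is exactly what makes $PQ$ idempotent and self-adjoint, so the pathology disappears. Everything else is a short computation in a commutative algebra.
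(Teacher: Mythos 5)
Your proposal is correct and follows essentially the same route as the paper's own proof: commutativity gives $P\cap Q=PQ$ (via $(PQ)^n=PQ$ and the strong-limit formula), hence $P\cup Q=P+Q-PQ$, and distributivity is then a direct algebraic expansion. Your additional verification that $PQ$ is genuinely the greatest lower bound in the projection lattice is a small but welcome extra justification beyond what the paper writes.
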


\begin{proof} First, notice that, if $\valg$ is Abelian, for any pair of propositions $P$ and $Q$ we have $P\cap Q = PQ = QP$.
Indeed, it clearly holds $(PQ)^n = P^n Q^n = PQ$, and thus, recalling Eq. \eqref{eqn: cap for propositions}, it is $P\cap Q = \mathrm{s-}\lim_n (PQ)^n = PQ$. 
Then, it is
\be
(P\cup Q)' = P'\cap Q' = (\mathbf{1} - P)  (\mathbf{1} - Q) =  \mathbf{1} - P - Q + PQ,
\ee
which means
\be
P\cup Q = P + Q - PQ.
\ee
Now, we may easily check the distributive property  of the lattice $\mathcal{P}(\mathbf{G})$ (see Eq. \eqref{eqn: Boolean lattice}) by computing
\begin{eqnarray*}
P\cup (Q \cap R) &=& P \cap QR = P + QR - PQR  \\ &=&  P+  (PR - PR) + (QP - PQ) + QR - PQR + (-PQR + PQR ) \\ &=&   (P+ Q - PQ) (P + R - PR)  =   (P\cup Q) (P \cup R)   =  (P\cup Q) \cap (P \cup R) \, ,
\end{eqnarray*}
and 
\begin{eqnarray*}
P\cap (Q \cup R) &=& P (Q\cup R) = P(Q + R - QR) \\ &=&  PQ +PR - PQR =   (P Q)\cup  (P R) =  (P\cap Q) \cup (P \cap R)\, .
\end{eqnarray*}
\end{proof}

The converse is also true and, in addition, we gain additional information on the structure of the von Neumann algebra of a classical system.

\begin{theorem}\label{thm: equivalence between classical systems in the groupoid formalism and the propositional calculus}
Let $\mathbf{G}\rightrightarrows \Omega$ be a discrete countable groupoid.
The lattice of propositions $\mathcal{P}(\mathbf{G})$ of the groupoid is Boolean if and only if the groupoid $\mathbf{G}$ is totally disconnected, its isotropy groups are Abelian, and its von Neumann algebra $\valg$ is Abelian.
\end{theorem}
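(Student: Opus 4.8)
The plan is to prove both implications, with the bulk of the work in the ``only if'' direction. For the ``if'' direction, suppose $\mathbf{G}$ is totally disconnected with Abelian isotropy groups. Then $\mathbf{G}$ decomposes as a disjoint union of groups $\mathbf{G}_x$ (one for each orbit, which here equals each object since the groupoid is totally disconnected), so $\valg$ decomposes as a (von Neumann) direct sum/product of the group von Neumann algebras $\mathcal{V}(\mathbf{G}_x)$. Since each $\mathbf{G}_x$ is an Abelian group, its group von Neumann algebra is Abelian, hence so is $\valg$; then Proposition \ref{prop: countable groupoid with abelian VA has Boolean lattice of prop} immediately gives that $\mathcal{P}(\mathbf{G})$ is Boolean. (Strictly, one also needs $\valg$ Abelian $\Rightarrow$ the hypotheses, but that is subsumed in the next direction; so really the ``if'' direction only needs total disconnectedness plus Abelian isotropy.)

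For the ``only if'' direction, assume $\mathcal{P}(\mathbf{G})$ is Boolean; I want to extract total disconnectedness, Abelian isotropy, and Abelian $\valg$. The first step is to show $\mathbf{G}$ is totally disconnected, i.e., there are no nontrivial arrows between distinct objects. Suppose $\alpha : x \to y$ with $x \neq y$. The key observation is that the units $\mathbf{1}_x$ and $\mathbf{1}_y$ are orthogonal propositions ($\mathbf{1}_x \mathbf{1}_y = 0 = \mathbf{1}_y \mathbf{1}_x$ since source/target don't match), but $\alpha$ conjugates one into the other: $\alpha \mathbf{1}_x \alpha^{-1} = \mathbf{1}_y$ as elements of $\mathbb{C}[\mathbf{G}]$. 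I would build from $\alpha$ a proposition that fails to commute with, say, $\mathbf{1}_x$, contradicting the Boolean (hence distributive, hence ``all propositions compatible'') structure. A clean candidate: consider $P = \tfrac{1}{2}(\mathbf{1}_x + \mathbf{1}_y + \alpha + \alpha^{-1})$ when $\alpha^2 = \mathbf{1}_x$ can be arranged, or more robustly work with the orbit $C^*$-subalgebra generated by $\mathbf{1}_x, \mathbf{1}_y, \alpha$, which is isomorphic to a matrix algebra $M_n(\mathbb{C})$ (or $M_n$ of the isotropy algebra) — and matrix algebras of size $\geq 2$ have non-distributive projection lattices. So the existence of an off-diagonal arrow forces a $2\times 2$ matrix block, whose projection lattice is the (non-Boolean) lattice of subspaces of $\mathbb{C}^2$; this contradicts $\mathcal{P}(\mathbf{G})$ Boolean, since sublattices of Boolean lattices are Boolean.

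Once $\mathbf{G}$ is totally disconnected, it is a disjoint union of its isotropy groups $\mathbf{G}_x = \mathbf{G}_x^x$, and $\valg = \bigoplus_x \mathcal{V}(\mathbf{G}_x)$ with $\mathcal{P}(\mathbf{G}) = \prod_x \mathcal{P}(\mathbf{G}_x)$; a product of lattices is Boolean iff each factor is, so each $\mathcal{P}(\mathbf{G}_x)$ is Boolean. It then remains to show: for a countable \emph{group} $G$, the projection lattice of the group von Neumann algebra $\mathcal{V}(G)$ is Boolean $\iff$ $G$ is Abelian $\iff$ $\mathcal{V}(G)$ is Abelian. The implication ``$G$ Abelian $\Rightarrow \mathcal{V}(G)$ Abelian $\Rightarrow \mathcal{P}$ Boolean'' is routine (Fourier/Pontryagin, then Proposition \ref{prop: countable groupoid with abelian VA has Boolean lattice of prop}). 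For the reverse, if $G$ is non-Abelian pick $g, h$ with $gh \neq hg$; I would produce two projections in $\mathcal{V}(G)$ that do not commute — for instance spectral projections of the self-adjoint elements $\tfrac{1}{2}(g + g^{-1})$ and $\tfrac{1}{2}(h + h^{-1})$, or use that a non-Abelian group has a finite or compact quotient giving a $\geq 2$-dimensional irreducible representation, which embeds a copy of $M_n(\mathbb{C})$ ($n \geq 2$) as a von Neumann subalgebra of $\mathcal{V}(G)$, again forcing a non-distributive sublattice.

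The main obstacle I anticipate is the ``non-Abelian group $\Rightarrow$ non-Boolean projection lattice'' step handled purely algebraically: exhibiting two concrete non-commuting projections inside $\mathcal{V}(G)$ from a single pair $g,h$ with $gh\neq hg$ is slightly delicate because $g$ and $h$ themselves are unitaries, not projections, and their naive symmetrizations need not have overlapping-enough spectral projections to witness non-commutativity. The cleanest route is representation-theoretic: any discrete group with $gh \neq hg$ admits a unitary representation $\pi$ in which $\pi(g)$ and $\pi(h)$ don't commute, hence (by taking an irreducible subrepresentation) a nontrivial irreducible representation of dimension $\geq 2$; realize the associated matrix block inside $\valg$ (or use that $\mathcal{V}(G)$ is non-Abelian directly, since $\lambda(g)$ and $\lambda(h)$ don't commute in the left regular representation), and invoke the standard fact that $M_n(\mathbb{C})$, $n \geq 2$, has a non-distributive (in fact irreducible, non-Boolean) projection lattice. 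This sidesteps the combinatorics entirely and makes the equivalence $\mathcal{P}(\mathbf{G})$ Boolean $\iff$ $\valg$ Abelian transparent, with total disconnectedness and Abelian isotropy being exactly the structural content of ``$\valg$ Abelian'' for a countable groupoid.
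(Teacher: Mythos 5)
Your argument is correct and is essentially the paper's own proof: the projection $P_\alpha=\tfrac12(1_x+1_y+\alpha+\alpha^{-1})$ attached to an arrow $\alpha\colon x\to y$ with $x\neq y$ violates distributivity against $1_x$ and $1_y$ (this is exactly your $M_2(\mathbb{C})$-block obstruction, made explicit), which forces total disconnectedness; then $\valg=\bigoplus_{x}\mathcal{V}(\mathbf{G}_x)$, and the fact that in a Boolean lattice all propositions are compatible, hence commute, together with the fact that a von Neumann algebra is generated by its projections, gives that each $\mathcal{V}(\mathbf{G}_x)$ and hence each group $\mathbf{G}_x$ is Abelian --- your contrapositive via non-commuting $\lambda(g),\lambda(h)$ is the same argument read backwards. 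Two minor points. First, the hedge ``when $\alpha^2=\mathbf{1}_x$ can be arranged'' rests on a misreading of the groupoid algebra: for $x\neq y$ the product $\alpha\cdot\alpha$ is simply zero because the composition is undefined, so $P_\alpha$ is always self-adjoint and idempotent with no extra hypothesis needed. Second, the detour through a finite-dimensional irreducible representation of dimension at least $2$ is not reliable for a general countable group (the left regular representation of an infinite group need not have irreducible subrepresentations, and a non-Abelian discrete group need not admit any finite-dimensional irreducible representation of dimension $\geq 2$); fortunately your parenthetical fallback --- $\lambda(g)$ and $\lambda(h)$ fail to commute in the faithful left regular representation, so $\mathcal{V}(\mathbf{G}_x)$ is non-Abelian and therefore contains a pair of non-commuting, hence incompatible, projections --- is the correct and sufficient route, and is the one the paper takes.
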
 

\begin{proof}
In Prop. \ref{prop: countable groupoid with abelian VA has Boolean lattice of prop} we have already seen that if $\valg$ is Abelian, then $\mathcal{P}(\mathbf{G})$ is Boolean. Now, let us prove first that if $\mathcal{P}(\mathbf{G})$  is Boolean, then $\mathbf{G}$ is totally disconnected.
We will do it by showing that, given $x\neq y \in \Omega$, there is no transition $\alpha$ between $x$ and $y$ (i.e., there are no ``quantum jumps'' from $x$ to $y$). 
Suppose that such a transition $\alpha\,:\,x\rightarrow y$, with $x\neq y$, exists.
Then, let us define the proposition 
\be
P_{\alpha}\,:= \, \frac{1}{2}\left( 1_x+1_y+\alpha+\alpha^{-1} \right).
\ee
Now, notice that we have
\be
1_xP_{\alpha} = \frac{1}{2}\left( 1_x + \alpha^{-1} \right),
\ee
from which it follows that (see equation \eqref{eqn: cap for propositions})
\begin{equation}
1_x\cap P_{\alpha}= \mathrm{s-}\lim_{\substack{n\rightarrow \infty}} (1_x P_{\alpha})^n = \mathrm{s-}\lim_{\substack{n\rightarrow \infty}} \frac{1}{2^n}\left( 1_x + \alpha^{-1} \right) = 0\,.
\end{equation}
On the other hand, we have (see Eq. \eqref{eqn: cup for propositions}):
\be
(1_x\cup 1_y)' = 1_x' \cap 1_y' = (\mathbf{1} - 1_x ) \cap  (\mathbf{1} - 1_y )  = s-\lim_n  ((\mathbf{1} - 1_x )   (\mathbf{1} - 1_y ))^n  = \mathbf{1} -  (1_x+1_y)\, ,
\ee
and, hence,
$$
1_x\cup 1_y = 1_x + 1_y \, .
$$
Moreover,
\be
P_{\alpha}(1_x+1_y) = 2 P_{\alpha}.
\ee
Therefore, we conclude that
\begin{equation*}
P_{\alpha} \cap (1_x\cup 1_y) = P_{\alpha}\cap (1_x+1_y)= P_{\alpha} ,
\end{equation*}
and, on the other hand, that:
\begin{equation*}
(P_{\alpha}\cap 1_x)\cup(P_{\alpha}\cap 1_y)=0\cup 0= 0\,,
\end{equation*}
and these two results are obviously different.
This means that, if there is a ``quantum jump'', we can construct two propositions which are not compatible, and this is a contradiction with the hypothesis of the theorem.
Consequently, $\mathbf{G}\rightrightarrows \Omega$ is totally disconnected and 
\be
\mathbf{G}= \sqcup_{\substack{x\in \Omega}} \mathbf{G}_x \, ,
\ee
with $\mathbf{G}_x$ being the isotropy group at $x$. 
According to \cite[prop. 10.12]{I-R-2019}, the algebra $\mathbb{C}[\mathbf{G}]$ of transitions of the totally disconnected groupoid $\mathbf{G}$ can be written as
\be
\mathbb{C}[\mathbf{G}]\,=\,\bigoplus_{x\in\Omega}\,\mathbb{C}[\mathbf{G}_{x}]\,,
\ee
where $\mathbb{C}[\mathbf{G}_{x}]$ is the algebra of transitions of the isotropy group $\mathbf{G}_{x}$ seen as a groupoid with only one object.
Then, the von Neumann algebra of $\mathbf{G}$ will have the form
\be
\mathcal{V}(\mathbf{G})\,=\,\bigoplus_{x\in\Omega}\,\mathcal{V}(\mathbf{G}_{x})\,.
\ee
By hypothesis, $\mathbf{G}_x$ is a countable discrete group, and  the lattice of propositions $\mathcal{P}(\mathbf{G}_x)$ is a sublattice of the algebra of propositions $\mathcal{P}(\mathbf{G})$, hence, it is a Boolean sublattice of the Boolean lattice $\mathcal{P}(\mathbf{G})$.   The von Neumann algebra $\mathcal{V}(\mathbf{G}_{x})$ is generated by its projectors, that by hypothesis are compatible propositions, hence they commute (see \cite[Sect. 5.8]{Jauch-1952}) and  $\mathcal{V}(\mathbf{G}_{x})$ is Abelian.
Hence, all isotropy groups $\mathbf{G}_x$ must be Abelian and  we have
\begin{equation}
\mathcal{V}(\mathbf{G}) = \bigoplus_{\substack{x\in \Omega}}\mathcal{V}(\mathbf{G}_x) \cong \bigoplus_{\substack{x\in \Omega}} \mathcal{L}^{\infty}(\widehat{\mathbf{G}}_x) =\mathcal{L}^{\infty} \left(\sqcup_{\substack{x\in \Omega}} \widehat{\mathbf{G}}_x \right)\,,
\end{equation}
with $\widehat{\mathbf{G}}_x$ being the Pontryagin's dual group (or group of characters) of ${\mathbf{G}}_x$.  
\end{proof}

From this, we conclude that classical systems (at least those described by countable groupoids),  both in the groupoid picture and in the   quantum logic approach of Birkhoff and von Neumann, correspond to the same notion and are described by Abelian von Neumann algebras that can always be realized as $\mathcal{L}^{\infty}(X)$ with $X = \sqcup_{\substack{x\in \Omega}} \hat{\mathbf{G}}_x$.
The states of these algebras  are non-negative, normalized integrable functions on $X $, i.e., functions $\rho\, :\, X \, \rightarrow \,\mathbb{R}_+$, such that 
\be
\int_X \rho(x)d\mu(x) = 1\,.
\ee


\subsection{Composition}

In order to meaningfully discuss the situation depicted by Schr\"{o}dinger, we must consider the composition of two systems, in particular, of a classical system with a quantum one.
We will carry on the analysis of composite systems in the groupoid framework discussed above.

The composition of systems we will consider here is the simplest one (or the na\"{i}ve one), which intuitively corresponds to the idea ``to put the systems side by side on the laboratory table'' \cite{L-Y-1999,L-Y-2000}, i.e., all possible outcomes of both systems can be determined simultaneously, and all possible transitions of both systems can actually happen. 
As thoroughly discussed in \cite{C-DC-I-M-2020}, in the language of groupoids, this  notion of composition corresponds to the direct product of groupoids.
Specifically, given two systems A and B described by two groupoids $\mathbf{G}_A\rightrightarrows \Omega_A$ and $\mathbf{G}_B \rightrightarrows \Omega_B$ respectively, their direct composition, denoted by $A\times B$,   corresponds to the system described by the groupoid $\mathbf{G}_A\times \mathbf{G}_B \rightrightarrows \Omega_A \times \Omega_B$, which is the direct product of the groupoids $\mathbf{G}_A$ and $\mathbf{G}_B$. 
The units of $\mathbf{G}_A\times \mathbf{G}_B$ are given by the pairs $(1_{x_A},1_{x_B})$, and the inverse of the transition $(\alpha_A, \alpha_B)$ is $(\alpha_A,\alpha_B)^{-1} = (\alpha_A^{-1},\alpha_B^{-1})$.

Our aim is now to prove that the algebra of transitions of the composite groupoid  is the (algebraic) tensor product of the algebras of transitions of the composing groupoids.

\begin{proposition}
Given the  countable groupoids $\mathbf{G}_A\rightrightarrows \Omega_A$, $\mathbf{G}_B \rightrightarrows \Omega_B$, and $\mathbf{G}_A\times \mathbf{G}_B \rightrightarrows \Omega_A \times \Omega_B$, it holds
\be
\mathbb{C}[\mathbf{G}_A\times \mathbf{G}_B]\, \cong\, \mathbb{C}[\mathbf{G}_A]\,\otimes\,\mathbb{C}[\mathbf{G}_B]\,.
\ee
\end{proposition}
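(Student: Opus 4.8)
The plan is to construct an explicit linear isomorphism between the two algebras and check that it respects the $*$-algebra structure. Recall that $\mathbb{C}[\mathbf{G}_A\times\mathbf{G}_B]$ has, as a vector space, a basis indexed by the transitions $(\alpha_A,\alpha_B)\in\mathbf{G}_A\times\mathbf{G}_B$, while $\mathbb{C}[\mathbf{G}_A]\otimes\mathbb{C}[\mathbf{G}_B]$ has a basis indexed by pairs $(\alpha_A,\alpha_B)$ with $\alpha_A\in\mathbf{G}_A$, $\alpha_B\in\mathbf{G}_B$ (tensor products of basis elements). So the obvious candidate is the linear map $\Phi$ determined on basis elements by $\Phi((\alpha_A,\alpha_B)) := \alpha_A\otimes\alpha_B$. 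Since both sides are free vector spaces on the set $\mathbf{G}_A\times\mathbf{G}_B$ (here I use countability only to be sure these are genuine, honest bases and the algebraic tensor product is what is meant), $\Phi$ is automatically a linear bijection.

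The substance of the proof is then to verify that $\Phi$ is multiplicative and intertwines the involutions. For multiplicativity, I would compute both sides on basis elements. On the groupoid side, the product of $(\alpha_A,\alpha_B)$ and $(\beta_A,\beta_B)$ in $\mathbb{C}[\mathbf{G}_A\times\mathbf{G}_B]$ is nonzero precisely when $s(\alpha_A,\alpha_B)=t(\beta_A,\beta_B)$, i.e. when $s(\alpha_A)=t(\beta_A)$ \emph{and} $s(\alpha_B)=t(\beta_B)$, in which case it equals $(\alpha_A\circ\beta_A,\alpha_B\circ\beta_B)$. Hence $\Phi$ applied to this product is $\delta_{(\alpha_A,\beta_A)}\delta_{(\alpha_B,\beta_B)}\,(\alpha_A\circ\beta_A)\otimes(\alpha_B\circ\beta_B)$, where I use $\delta$ for the composability indicator from the definition of the groupoid algebra. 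On the tensor side, $(\alpha_A\otimes\alpha_B)(\beta_A\otimes\beta_B)=(\alpha_A\beta_A)\otimes(\alpha_B\beta_B)$ by definition of the product on a tensor product of algebras, and each factor is again governed by the same composability indicator. The two expressions agree term by term, so $\Phi(XY)=\Phi(X)\Phi(Y)$ on basis elements, and by bilinearity for all $X,Y$. The involution is even easier: $(\alpha_A,\alpha_B)^{-1}=(\alpha_A^{-1},\alpha_B^{-1})$ in the product groupoid (as already noted just before the statement), so $\Phi$ sends the $*$ of a basis element to the $*$ of its image, and conjugate-linearity extends this to the whole algebra.

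There is no serious obstacle here; the only point requiring a word of care is bookkeeping about the composability conditions, namely that "$(\alpha_A,\alpha_B)$ composable with $(\beta_A,\beta_B)$ in $\mathbf{G}_A\times\mathbf{G}_B$" is \emph{equivalent} to "$\alpha_A$ composable with $\beta_A$ in $\mathbf{G}_A$ and $\alpha_B$ composable with $\beta_B$ in $\mathbf{G}_B$", which is immediate from the componentwise definition of source and target in the direct product. One should also confirm that the unit $\mathbf{1}=\sum_{x}1_x$ on each side corresponds under $\Phi$, which follows from $1_{(x_A,x_B)}=(1_{x_A},1_{x_B})$ and $\Phi((1_{x_A},1_{x_B}))=1_{x_A}\otimes 1_{x_B}$ together with distributivity of the sum. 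Thus $\Phi$ is the desired $*$-algebra isomorphism.

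Finally, I would remark that this identification is the algebraic backbone of the rest of the composition discussion: passing to von Neumann completions, $\valg[A\times B]\cong\valg[A]\overline{\otimes}\valg[B]$ (the von Neumann tensor product), which is exactly the setting in which Raggio's theorem and its groupoid counterpart will be phrased. But the present proposition lives purely at the level of the $*$-algebras of transitions, and for that the explicit map $\Phi$ on basis elements, together with the two routine verifications above, suffices.
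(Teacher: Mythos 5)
Your proof is correct and is built on the same canonical map as the paper's (there called $i_{AB}$), namely the linear extension of $(\alpha_A,\alpha_B)\mapsto\alpha_A\otimes\alpha_B$; the difference lies in where the effort is spent. The paper devotes its entire proof to bijectivity --- injectivity is simply asserted, and surjectivity is shown by explicitly exhibiting a preimage $\tilde{c}$ of a generic element $c=\sum_{j}c_{j}\,a_{A}^{j}\otimes b_{B}^{j}$ --- and never verifies that the map respects multiplication or the involution. You dispatch bijectivity in one line via the standard fact that the tensor product of free vector spaces on sets $S$ and $T$ is free on $S\times T$, and instead spend your effort on the $*$-homomorphism property, reducing it to the observation that composability in the direct product groupoid is exactly the conjunction of componentwise composability, so the indicator $\delta$ factors. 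Since the proposition asserts an isomorphism of $*$-algebras and not merely of vector spaces, your verification of multiplicativity and of compatibility with the involution fills in a step the paper leaves implicit; conversely, your bijectivity shortcut is legitimate provided one accepts the basis description of the algebraic tensor product, which the paper in effect re-derives by hand. Both arguments are sound, and taken together they constitute the complete proof.
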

\begin{proof}
Consider the algebraic tensor product $ \mathbb{C}[\mathbf{G}_A]\,\otimes\,\mathbb{C}[\mathbf{G}_B]$ of the algebras of transitions of the groupoids, and  consider the   map $i_{AB}\colon  \mathbb{C}[\mathbf{G}_A \times  \mathbf{G}_B] \ra \mathbb{C}[\mathbf{G}_A]\,\otimes\,\mathbb{C}[\mathbf{G}_B]$ obtained by extending by linearity the map 
\be
(\alpha_{A},\beta_{B})\,\mapsto \,i_{AB} (\alpha_{A},\beta_{B})\,:=\,\alpha_{A}\otimes\beta_{B}.
\ee
Specifically, denoting by $(\gamma_{A},\gamma_{B})$ an element in the direct product $\mathbf{G}_{A}\times\mathbf{G}_{B} \rightrightarrows \Omega_A \times \Omega_B$, every  element in the algebra of transition can be written as (see Eq. \eqref{eqn: element of algebra of transitions})
\be
a=\sum_{(\gamma_{A},\gamma_{B})\in\mathbf{G}_A\times\mathbf{G}_B}\,a_{(\gamma_{A},\gamma_{B})}\,(\gamma_{A},\gamma_{B}),
\ee
and the map $i_{AB}$ reads
\be
i_{AB}(a)\,=\,\sum_{(\gamma_{A},\gamma_{B})\in\mathbf{G}_A\times\mathbf{G}_B}\,a_{(\gamma_{A},\gamma_{B})}\,\gamma_{A}\,\otimes\,\gamma_{B}.
\ee
It immediately follows that $i_{AB}$ is injective, and we will now see that it is actually surjective.
At this purpose, we consider a generic element 
\be
c=\sum_{j=1}^{N}\,c_{j}\,a_{A}^{j}\,\otimes\,b_{B}^{j}
\ee
in the algebraic tensor product  $\mathbb{C}[\mathbf{G}_A]\,\otimes\,\mathbb{C}[\mathbf{G}_B]$.
Note that $N$ is finite and depends on the element $c$.
Now, since $a_{A}^{j}\in\mathbb{C}[\mathbf{G}_A]$ and $b_{B}^{j}\in\mathbb{C}[\mathbf{G}_B]$, we have
\be
\begin{split}
a_{A}^{j}&\,=\,\sum_{\gamma_{A}\in\mathbf{G}_{A}}\,a_{\gamma_{A}}^{j}\,\gamma_{A}  \\
b_{B}^{j}&\,=\,\sum_{\gamma_{B}\in\mathbf{G}_{B}}\,b_{\gamma_{B}}^{j}\,\gamma_{B}  ,
\end{split}
\ee
and thus
\be
c=\sum_{j=1}^{N}\,\sum_{(\gamma_{A},\gamma_{B})\in\mathbf{G}_{A}\times\mathbf{G}_{B}} \,c_{j}\,a_{\gamma_{A}}^{j}\,b_{\gamma_{B}}^{j}\,\gamma_{A}  \,\otimes\,\gamma_{B}   \,.
\ee
Setting
\be
c_{(\gamma_{A},\gamma_{B})}\,:=\,\sum_{j=1}^{N}\,c_{j}\,a_{\gamma_{A}}^{j}\,b_{\gamma_{B}}^{j},
\ee
it is clear that the element
\be
\tilde{c}\,=\,\sum_{(\gamma_{A},\gamma_{B})\in\mathbf{G}_{A}\times\mathbf{G}_{B}}\,c_{(\gamma_{A},\gamma_{B})}\,(\gamma_{A},\gamma_{B})
\ee
in $\mathbb{C}[\mathbf{G}_{A}\times\mathbf{G}_{B}]$ is such that
\be
i_{AB}(\tilde{c})\,=\,c,
\ee
and thus $i_{AB}$ is surjective.
From this, we conclude that
\be
\mathbb{C}[\mathbf{G}_A\times \mathbf{G}_B]\, \cong\, \mathbb{C}[\mathbf{G}_A]\,\otimes\,\mathbb{C}[\mathbf{G}_B]\,,
\ee
as desired.
\end{proof}

Note that the unit of $\mathbb{C}[\mathbf{G}_A\times \mathbf{G}_B]$ is given by
\be
\mathbf{1}_{A\times B} = \sum_{\substack{x_A\in \Omega_A\\ x_B \in \Omega_B}}\left( 1_{x_A}, 1_{x_B} \right) = \left( \mathbf{1}_A, \mathbf{1}_B \right) \,.
\ee
If we use the isomorphism $\mathbb{C}[\mathbf{G}_A\times \mathbf{G}_B]\, \cong\, \mathbb{C}[\mathbf{G}_A]\,\otimes\,\mathbb{C}[\mathbf{G}_B]$ introduced above,  we can write
\be
\mathbf{1}_{A\times B} = \mathbf{1}_A\otimes \mathbf{1}_B\,.
\ee
Note that the left-regular representation of $\mathbb{C}[\mathbf{G}_A\times \mathbf{G}_B]$ is supported on the Hilbert space
\begin{equation}\label{eq:tensor}
\mathcal{L}^{2}(\mathbf{G}_{A}\times\mathbf{G}_{B})\,\cong\,\mathcal{L}^{2}(\mathbf{G}_{A})\, \otimes\,\mathcal{L}^{2}(\mathbf{G}_{A}),
\end{equation}
and thus it is the tensor product of the left-regular representations of $\mathbb{C}[\mathbf{G}_{A}]$ and $\mathbb{C}[\mathbf{G}_{B}]$ respectively.


Now, we will specialize to the composition of a classical system with a quantum system in order to analyse Schr\"{o}dinger's {\itshape Gedankenexperiment} in the next section.
Hence, if A denotes a classical system with totally disconnected Abelian groupoid 
\be
\mathbf{G}_A = \bigsqcup_{\substack{x_A\in \Omega_A}}\mathbf{G}_{x_A},
\ee 
where $\mathbf{G}_{x_A}$ is Abelian  $\forall x_A \in \Omega_A$, and B is a quantum system with groupoid $\mathbf{G}_B\rightrightarrows \Omega_B$ (the ``quantumness'' of B is encoded in the assumption that $\mathbb{C}[\mathbf{G}_{B}]$ is not Abelian),  the direct composition of both will be the groupoid $\mathbf{G}_A\times\mathbf{G}_B\rightrightarrows \Omega_A\times\Omega_B$.
The outcomes of the composite system $A\times B$ will consist  in all possible pairs of outcomes $(x_A,x_B)\in \Omega_A\times \Omega_B$, as in the general situation. 
On the other hand, the transitions of the composition will   have the form $(\gamma_{x_A}, \alpha_B)$ with $\gamma_{x_A}\in \mathbf{G}_{x_A}$ and $\alpha_B\,:\, x_B\,\rightarrow \, y_B$. 
Assuming that $\mathbf{G}_B$ is connected, the orbits of the groupoid $\mathbf{G}_A\times \mathbf{G}_B$ have the form 
\be
\mathcal{O}_{x_A} = \left\lbrace (x_A,x_B) | x_A\: \mathrm{fixed},\, x_B\in \Omega_B \right\rbrace = \left\lbrace x_A \right\rbrace\times \Omega_B,
\ee
from which we conclude that  the groupoid $\mathbf{G}_A\times\mathbf{G}_B$ is not connected and the space of outcomes decomposes as the disjoint union of the family of orbits $\mathcal{O}_{x_A}$ according to 
\be
\Omega_A\times \Omega_B = \bigsqcup_{\substack{x_A\in\Omega_A}}\mathcal{O}_{x_A}.
\ee

As before, the algebra of virtual transitions of the composite system $A\times B$ has the form  
\be\label{eqn: tensor product of systems}
\mathbb{C}[\mathbf{G}_A\times \mathbf{G}_B]\, = \mathbb{C}[\mathbf{G}_A]\,\otimes\,\mathbb{C}[\mathbf{G}_B]\,,
\ee
while, to understand the structure of the von Neumann algebra $\valgab$, we need a preliminary lemma.

\begin{lemma}\label{lem:tensor}
Let $\hh_{A}$ and $\hh_{B}$ be two complex, separable Hilbert spaces, and let $A$ and $B$ be two unital $*$-subalgebras of $\mathcal{B}(\mathcal{H}_{A})$ and $\mathcal{B}(\hh_{B})$, respectively.
Let $\mathcal{V}(A)\subset\mathcal{B}(\hh_{A})$ and $\mathcal{V}(B)\subset\mathcal{B}(\hh_{B})$ denote the von Neumann algebras generated by $A$ and $B$ respectively (i.e., their double commutants or weak closures).
Then, it holds
\be
\mathcal{V}(A) \widehat{\otimes} \mathcal{V}(B) = \mathcal{V}(A\otimes B),
\ee
where $\widehat{\otimes}$ denote the tensor product of von-Neumann algebras.
\end{lemma}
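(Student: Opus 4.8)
The plan is to unwind all the definitions and reduce the statement to a single double-commutant identity. Recall that, by definition, $\mathcal{V}(A)=A''$ and $\mathcal{V}(B)=B''$ (commutants taken inside $\mathcal{B}(\mathcal{H}_A)$ and $\mathcal{B}(\mathcal{H}_B)$ respectively), that $\mathcal{V}(A\otimes B)=(A\otimes B)''$ with the algebraic tensor product $A\otimes B$ regarded as a unital $*$-subalgebra of $\mathcal{B}(\mathcal{H}_A\otimes\mathcal{H}_B)$, and that the von Neumann tensor product is $\mathcal{V}(A)\widehat{\otimes}\mathcal{V}(B)=(\mathcal{V}(A)\otimes\mathcal{V}(B))''$ inside the same $\mathcal{B}(\mathcal{H}_A\otimes\mathcal{H}_B)$. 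Thus the assertion to be proved is exactly $(\mathcal{V}(A)\otimes\mathcal{V}(B))''=(A\otimes B)''$. Since $A\subseteq\mathcal{V}(A)$ and $B\subseteq\mathcal{V}(B)$, we immediately get $A\otimes B\subseteq\mathcal{V}(A)\otimes\mathcal{V}(B)$, and taking double commutants yields the inclusion $(A\otimes B)''\subseteq(\mathcal{V}(A)\otimes\mathcal{V}(B))''$ for free. The entire content of the lemma is therefore the reverse inclusion $(\mathcal{V}(A)\otimes\mathcal{V}(B))''\subseteq(A\otimes B)''$.

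For the reverse inclusion I would argue that it suffices to show $\mathcal{V}(A)\otimes\mathcal{V}(B)\subseteq(A\otimes B)''$: once this is known, taking double commutants and using that $(A\otimes B)''$ is already a von Neumann algebra, hence equal to its own double commutant, gives $(\mathcal{V}(A)\otimes\mathcal{V}(B))''\subseteq((A\otimes B)'')''=(A\otimes B)''$. Since $(A\otimes B)''$ is a linear subspace, it is in turn enough to place every elementary tensor $T\otimes S$, with $T\in\mathcal{V}(A)$ and $S\in\mathcal{V}(B)$, inside it. The key algebraic remark is the factorization $T\otimes S=(T\otimes 1_{\mathcal{H}_B})(1_{\mathcal{H}_A}\otimes S)$, combined with the fact that unitality of $A$ and $B$ forces $A\otimes 1_{\mathcal{H}_B}\subseteq A\otimes B$ and $1_{\mathcal{H}_A}\otimes B\subseteq A\otimes B$, and hence $(A\otimes 1_{\mathcal{H}_B})''\subseteq(A\otimes B)''$ and $(1_{\mathcal{H}_A}\otimes B)''\subseteq(A\otimes B)''$. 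So everything reduces to the two symmetric claims $T\otimes 1_{\mathcal{H}_B}\in(A\otimes 1_{\mathcal{H}_B})''$ and $1_{\mathcal{H}_A}\otimes S\in(1_{\mathcal{H}_A}\otimes B)''$; granting these, $T\otimes S$ is a product of two elements of $(A\otimes B)''$ and hence lies in $(A\otimes B)''$.

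To establish $T\otimes 1_{\mathcal{H}_B}\in(A\otimes 1_{\mathcal{H}_B})''$ I would proceed as follows. By von Neumann's bicommutant theorem, $T\in\mathcal{V}(A)=A''$ is the strong-operator limit of a net $(a_i)$ in $A$, and by the Kaplansky density theorem this net can be chosen norm-bounded (by $\|T\|$ after rescaling the unit-ball version of Kaplansky). The boundedness then forces $a_i\otimes 1_{\mathcal{H}_B}\to T\otimes 1_{\mathcal{H}_B}$ strongly: convergence on elementary tensors $\eta\otimes\zeta$ is immediate from $a_i\eta\to T\eta$, it extends to the algebraic tensor product by linearity, and it passes to the Hilbert-space completion via the uniform bound $\sup_i\|a_i\otimes 1_{\mathcal{H}_B}\|=\sup_i\|a_i\|<\infty$. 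Since $(A\otimes 1_{\mathcal{H}_B})''$ is a von Neumann algebra it is strongly closed, so it contains $T\otimes 1_{\mathcal{H}_B}$; the claim for $1_{\mathcal{H}_A}\otimes S$ is proved identically. Assembling the pieces gives $\mathcal{V}(A)\otimes\mathcal{V}(B)\subseteq(A\otimes B)''$ and hence the lemma.

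The only genuinely delicate step is the passage $a_i\otimes 1\to T\otimes 1$ strongly, which can fail if the approximating net is unbounded; this is precisely where the Kaplansky density theorem is needed and where the separability hypothesis, though convenient, is inessential. An alternative route that avoids Kaplansky would be to invoke the elementary half of the commutation theorem, $(A\otimes 1_{\mathcal{H}_B})'=A'\,\widehat{\otimes}\,\mathcal{B}(\mathcal{H}_B)$, and then verify directly that $T\otimes 1_{\mathcal{H}_B}$ commutes with the strongly dense subalgebra $A'\otimes\mathcal{B}(\mathcal{H}_B)$ of this commutant, using only $T\in A''$; I would mention this as a remark but carry out the Kaplansky argument as the main line, since it is the most self-contained.
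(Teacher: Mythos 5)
Your proof is correct and follows essentially the same route as the paper's: both establish the two inclusions between $(A\otimes B)''$ and $(\mathcal{V}(A)\otimes\mathcal{V}(B))''$, one coming from $A\otimes B\subseteq\mathcal{V}(A)\otimes\mathcal{V}(B)$ and the other from embedding $\mathcal{V}(A)\otimes\mathbf{1}$ and $\mathbf{1}\otimes\mathcal{V}(B)$ into $\mathcal{V}(A\otimes B)$. The only difference is that you supply the analytic justification (Kaplansky density plus the uniform bound needed to pass $a_i\otimes\mathbf{1}\to T\otimes\mathbf{1}$ strongly) for the step that the paper simply asserts as ``induced inclusions'', which is a genuine improvement in rigor rather than a different argument.
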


\begin{proof}
We will consider first the algebraic tensor product $A \otimes B$ of  $A$ and $B$.
The resulting algebra is supported on the Hilbert space $\hh=\hh_{A}\otimes\hh_{B}$.
We denote by $\mathcal{V}(A\otimes B) = (A\otimes B)''$ the von Neumann algebra generated by it. 
Clearly,  $\mathcal{V}(A\otimes B)$ is supported on $\hh$.
The von Neumann algebra $ \mathcal{V}(A) \widehat{\otimes} \mathcal{V}(B)$ is itself also supported on $\hh$, essentially by the definition of the tensor product of von Neumann algebras \cite{Takesaki-2002}.

Now, consider the canonical inclusion $A\otimes B \subset \mathcal{V}(A) \widehat{\otimes} \mathcal{V}(B)$.
The closure in the weak topology will induce a continuous inclusion
$$
\mathcal{V}(A \otimes B) \subset \mathcal{V}(A) \widehat{\otimes} \mathcal{V}(B) \,.
$$
On the other hand, there are natural inclusions: $A \hookrightarrow A \otimes B$, $a \hookrightarrow a \otimes 1_B$, and $B \hookrightarrow A\otimes B$, $b \hookrightarrow 1_A \otimes b$, that induce inclusions $\mathcal{V}(A) \subset \mathcal{V}(A\otimes B)$, $\mathcal{V}(B) \subset \mathcal{V}(A\otimes B)$ respectively.  Consequently, $\mathcal{V}(A) \otimes \mathcal{V}(B)\subset \mathcal{V}(A\otimes B)$.  
Finally, considering the closure with respect to the weak operator topology, we get:
$$
\mathcal{V}(A) \widehat{\otimes} \mathcal{V}(B)\subset \mathcal{V}(A\otimes B) \, .
$$
\end{proof}

Now, consider the separable Hilbert spaces $\hh_{A}=\mathcal{L}^{2}(\mathbf{G}_{A})$, $\hh_{B}=\mathcal{L}^{2}(\mathbf{G}_{B})$, and $\mathcal{H}=\hh_{A}\otimes\hh_{B}$, and the unital *-subalgebras $\mathbb{C}[\mathbf{G}_{A}] $ and $ \mathbb{C}[\mathbf{G}_{B}]$ of $\mathcal{B}(\mathcal{H}_{A})$ and $\mathcal{B}(\hh_{B})$, respectively.
Recalling Eq. \eqref{eqn: tensor product of systems}, and applying Lemma \ref{lem:tensor}, we obtain
\be
\mathcal{V}(\mathbf{G}_{A}\,\times\,\mathbf{G}_{B})\,=\,\valga\,\widehat{\otimes}\,\valgb\,.
\ee
Moreover,  the algebra $\mathbb{C}[\mathbf{G}_{A}]$ of the classical system reads (recall the discussion in Sect. \ref{sec:classical} and \cite[prop. 10.12]{I-R-2019})
\be
\mathbb{C}[\mathbf{G}_{A}]\,=\,\bigoplus_{x_{A}\in\Omega_{A}}\,\mathbb{C}[\mathbf{G}_{x_{A}}],
\ee
and thus
\be
\valga\,=\,\bigoplus_{x_{A}\in\Omega_{A}}\,\valgxa\,.
\ee
Exploiting the distributive property of the tensor product over the direct sum we obtain 
\be\label{eqn: decomposition of the algebra of a classical and quantum system 2}
\valgab\,=\,\bigoplus_{x_{A}\in\Omega_{A}}\,\left(\valgxa\,\widehat{\otimes}\,\valgb\right)\,.
\ee


%
%


The decomposition of the algebra given in equation \eqref{eqn: decomposition of the algebra of a classical and quantum system 2} will play a key role in the analysis of Schr\"{o}dinger's {\itshape Gedankenexperiment}, and it is important to point out that the direct product structure of the algebra of the composite system is manifest only in the groupoid formalism presented.
Indeed, in the algebraic approach, one starts with an Abelian algebra for the classical system and with a non-Abelian algebra for the quantum system, and then builds the algebra of the composite system as the tensor product of the algebra.
The result is an algebra in a form similar to that in equation \eqref{eqn: tensor product of systems} for the algebra of the direct product of groupoids.
However, it is clear that  the direct sum decomposition of the tensor product algebra given in equation \eqref{eqn: decomposition of the algebra of a classical and quantum system 2} explicitely depends on the fact that the groupoid of the classical system is disconnected, and thus, if we have access only to the information on the algebra of the classical system (i.e, the fact that it is an Abelian algebra), we are not able to immediately detect and appreciate the direct sum decomposition.
This instance points to the fact that the groupoid formalism allows a more comprehensive understanding of the properties of physical systems and their associated algebras in a way which is conceptually similar to how the algebraic picture of quantum mechanics provides a more comprehensive understanding of the Hilbert space picture (think, for instance, of the notion of superselection sectors \cite[Chap. IV.1]{Haag-1996}).


\section{Schr\"{o}dinger cat states again}

We are ready now to analyse Schr\"{o}dinger's {\itshape Gedankenexperiment} in the groupoid formalism.
Specifically, we consider a composite system $A\times B$, where $A$ is a classical system and $B$ is a quantum system, and a closed evolution of the composite system that reflects the fact that the system does not interact with the external world.
We need to understand what happens to the composite system when the inital state is a product state in accordance with the fact that we want to model a situation where the   cat and the radioactive material are initially prepared independently and put together ``side by side'' in the box.

From the point of view of the theory of von Neumann algebras, it is known that the space of states of a von Neumann algebra which is the tensor product of an Abelian algebra with an arbitrary one does not contain entangled states because of Raggio's theorem \cite{Raggio-1981} (see also \cite{Takesaki-2002,Baez-1987,Raggio-1988}).
Therefore, if the cat is considered to be a classical system in the sense discussed in the previous sections, then, given  an arbitrary initial product state, it is not possible to create an entangled state of the type given in equation \eqref{eqn: alive-dead entangled state} by means of a dynamical evolution.
This implies that there are no inconsistencies between Schr\"{o}dinger's {\itshape Gedankenexperiment}.
The cat will remain alive (or in the classical state determined by the evolution of the classical system) until the radiative material emits a particle, triggers the detector and the poison is liberated, at which moment the classical state of the cat changes to dead. 
The state we will find upon opening the box will be $|D\rangle \otimes |-\rangle$, but no  `collapse' of the entangled state vector happened  because there never was an entangled state to begin with.   

In the rest of this section, we will offer an alternative view on Raggio's theorem based on the groupoid picture introduced before.
The main point will be the use of the direct product decomposition of the algebra given in equation \eqref{eqn: decomposition of the algebra of a classical and quantum system 2} and of the properties of the representations of the groupoid $\mathbf{G}_{A}\times\mathbf{G}_{B}$ to show that there is no unitary dynamics of the composite system that can produce an entangled state from a product one.
Of course, this result is expected in the light of Raggio's theorem, however, the derivation we offer is completely independent from Raggio's theorem.
Moreover, the strategy we adopt to prove our result will allow to present an example in which a conclusion similar to that of Raggio's theorem is true even if both algebras are non-Abelian.
Specifically, if one of the two groupoids, say $\mathbf{G}_{A}$, is totally disconnected but its isotropy subgroups are no longer Abelian, then the resulting algebra will be non-Abelian, but there will be a particular family of product states on $\valgab$ that can not evolve into entangled states by means of a unitary dynamics.

Given a composite system $A\times B$, where $A$ is a classical system (i.e., it is described by a totally disconnected groupoid $\mathbf{G}_{A}$ with Abelian isotropy subgroups and Abelian von Neumann algebra $\valga$), we know from the discussion in the previous  section that the groupoid $\mathbf{G}_{A\times B}$ of the composite system is the direct product $\mathbf{G}_A\times \mathbf{G}_B$ of the groupoids $\mathbf{G}_{A}$ and $\mathbf{G}_{B}$, and the von Neumann algebra $\valgab$ has the   form written in Eq. \eqref{eqn: decomposition of the algebra of a classical and quantum system 2}.


A closed dynamical evolution  is given by a strongly continuous one-parameter group  $\Phi_{t}$ of automorphism of $\valgab$  \cite{C-I-M-03-2019}.
For the purpose of this paper, we consider  dynamical evolutions  $\Phi_{t}$ that can be written in terms of a one-parameter group  $u_{t} = \exp itH$ of unitary elements in $\valgab$ by means of
\be\label{eqn: heisenberg picture of dynamics}
\Phi_{t}(a)\,=\,u_{t}\,a\,u_{t}^{\dagger}\,.
\ee
This is true if $\mathbf{G}_{B}$ is connected, in which case it gives rise to a non-Abelian von Neumann algebra $\valgb$.
Equation \eqref{eqn: heisenberg picture of dynamics} is the Heisenberg picture of the dynamics, while the Schr\"{o}dinger picture can be obtained by taking the dual action of group $\Phi_{t}$ on the dual of $\valgab$.
Specifically, if $\rho$ is a state on $\valgab$, then the evolution $\rho_t$ of the state will be given by
\be\label{eqn: schroedinger picture of dynamics}
\rho_{t}(a)\,=\,\rho(\Phi_{t}(a))\,=\,\rho(u_{t}\,a\,u_{t}^{\dagger})
\ee
for all $a\in\valgab$.

To meaningfully discuss  Schr\"{o}dinger's {\itshape Gedankenexperiment} in the groupoid picture, we need to find a suitable candidate for the initial state in Eq. \eqref{eqn: schrodinger's cat state}.
To build such a state, we need to rely on the so-called  fundamental representation $\pi_{0}$ of the groupoid $\mathbf{G}_{A}\times\mathbf{G}_{B}$  of the system and of its algebra $\valgab$.
Here, we limit ourselves to recall only those aspects of the fundamental representation which are strictly needed  in the following, and refer to \cite{C-I-M-02-2019,I-R-2019} for further details.

The fundamental representation is supported on the separable Hilbert space   $\mathcal{H}_{A\times B} $ given by
\begin{equation}
\mathcal{H}_{A\times B} = L^2(\Omega_A \times \Omega_B) \cong L^2(\Omega_A) \otimes L^2 (\Omega_B) \equiv \mathcal{H}_A\otimes \mathcal{H}_B  \, .
\end{equation}
Note that   the vectors $|x_{A},x_{B}\rangle = |x_{A}\rangle \otimes |x_{B}\rangle$  corresponding to the delta function at $(x_A,x_B)$ on $\Omega_A \times \Omega_B$  determine an orthonormal basis of $\mathcal{H}_{A\times B} $.
The representation of a transition $\alpha\colon(y_{A},y_{B})\lra(z_{A},z_{B})$ in $\mathbf{G}_{A}\times\mathbf{G}_{B}$ on a basis element $|x_{A},x_{B}\rangle$ is given by
\be\label{eqn: fundamental representation of composite algebra}
\pi_{0}(\alpha)|x_{A},x_{B}\rangle\,=\,\delta_{x_{A}y_{A}}\,\delta_{x_{B}y_{B}}|z_{A},z_{B}\rangle .
\ee
Given an element $a=\sum_{j}a^{j}\alpha_{j}$ in $\valgab$ (where the sum may be infinite provided it converges in the weak topology), we have
\be\label{eqn: fundamental representation of composite algebra 2}
\pi_{0}(a)|x_{A},x_{B}\rangle\,=\,\sum_{j}\,a^{j}\pi_{0}(\alpha_{j})|x_{A},x_{B}\rangle\,.
\ee

On the other hand, the Hilbert space $\mathcal{H}_{A\times B}$ decomposes as the direct integral 
\begin{equation}
\mathcal{H}_{A\times B}= \int^\oplus_{\Omega_A} \mathcal{H}_{x_A}\, d\mu (x_A) \, ,
\end{equation}
where every 
\be\label{eqn: invariant subspace of fundamental representation}
\mathcal{H}_{x_A}= \mathbb{C}|x_A\rangle \otimes \mathcal{H}_B\,
\ee
is an invariant subspace, that is, we have $\pi_0(a)\mathcal{H}_{x_A}\subset \mathcal{H}_{x_A}$ for every $a \in \valgab$   (note also that $\mathcal{H}_{x_{A}}\,\cong\,\mathcal{H}_{B}$).  
Actually, for any finite subset $\Delta \subset \Omega_A$, we get the (finite-dimensional) subspace $\mathcal{H}_\Delta \subset \mathcal{H}_{A\times B}$ given by
\be
\mathcal{H}_\Delta = \bigoplus_{\substack{x_A\in \Delta}}\mathcal{H}_{x_A}\,.
\ee
Since $\mathbf{G}_A$ is totally disconnected ($A$ is a classical system), Eq. \eqref{eqn: fundamental representation of composite algebra 2} leads to
\be
\pi_{0}(a)|x_{A},x_{B}\rangle = \sum_j a^j |x_{A},y_{Bj}\rangle \, , 
\ee
where $a = \sum_j a^j \alpha_j$ with $\alpha_j = \gamma_{Aj} \otimes \alpha_{Bj}$,   $\gamma_{Aj} \in \mathbf{G}_{x_A}$, and $\alpha_{Bj} \colon x_B \to y_{Bj}$, and to 
\be
\pi_{0}(a)|x_{A},x_{B}\rangle = \mathbf{0}  
\ee
if $\gamma_{Aj}\in\mathbf{G}_{x_{Aj}}$ with $x_{Aj}\neq x_{A}$ for all $j$.

This means that the fundamental representation $\pi_0$ has a block diagonal structure  graphically depicted as
\begin{equation} \label{eqn: decomposition of fundamental representation}
\pi_{0} = 
\setlength{\fboxsep}{0pt}
\left(\begin{array}{@{}c@{}c@{}c@{\mkern-5mu}c@{\,}cc*{2}{@{\;}c}@{}}%
\noalign{\vskip 1.5ex}
 \ddots \\
  &\fbox{\,$\begin{matrix}
 \pi_{x_A} \\
  \end{matrix}$\,}
 \\[-0.4pt]
  && \hskip-0.4pt\fbox{\,$\begin{matrix}
  \pi_{x_A'} \\ 
  \end{matrix}$\,}\\[-0.5pt]
   &&& \hskip-0.5pt \fbox{\,$\begin{matrix}
 \pi_{x_A''} \\
  \end{matrix}$\,}
 \\[-0.4pt]
   &&&& \;\ddots
\end{array}\right),
\end{equation}
where $\pi_{x_A} = \pi_0|_{\mathcal{H}_{x_A}}$, or, in other words: $\pi_0 = \bigoplus_{x_A \in \Omega_A} \pi_{x_A}$.

Now, we are ready to discuss Schr\"{o}dinger's {\itshape Gedankenexperiment}.
At this purpose, we consider the normalized vector $| \Psi_{x_{a}} \rangle \in\mathcal{H}_{A\times B}$ given by
\be 
|\Psi_{x_{a}}\rangle\,=\,|x_{A},\psi_{B}\rangle\,=\,|x_{A}\rangle\,\otimes\,|\psi_{B}\rangle\,.
\ee
with $|\psi_B\rangle \in \mathcal{H}_B$.
This vector defines a state $\rho_{x_{A}}$ on $\valgab$ by means of
\be\label{eqn: schrodinger cat states}
\rho_{x_{A}}(a)\,:=\,\langle \Psi_{x_{a}}|\pi_{0}(a)|\Psi_{x_{a}}\rangle,
\ee
and $\rho_{x_{A}}$ is clearly a product state on $\valgab$.
The state $\rho_{x_{A}}$ is the analogue of the initial state of the Schr\"{o}dinger's cat considered in the introduction section \ref{sec.1:Introduction}.
We will now analyze its dynamical evolution $\Phi_{t}^{\sharp}(\rho_{x_{A}})$ and show that it is a product state for every $t$.

Indeed, the Heisenberg evolution represented in $\mathcal{H}_{A\times B}$ takes the form
\be
\pi_{0}\left(\Phi_{t}(a)\right)\,=\,\pi_{0}\left(u_{t}\,a\,u_{t}^{\dagger}\right)\,=\,\pi_{0}(u_{t})\,\pi_{0}(a)\,\pi_{0}(u_{t}^{\dagger}),
\ee
and we immediately conclude that
\be
\left(\Phi_{t}^{\sharp}(\rho_{x_{A} })\right)(a)\,=\,\langle \Psi_{x_{A} }|\pi_{0}(u_{t})\,\pi_{0}(a)\,\pi_{0}(u_{t}^{\dagger})|\Psi_{x_{A} }\rangle\,.
\ee
Then, recalling that the subspace given in equation \eqref{eqn: invariant subspace of fundamental representation} is invariant with respect to $\pi_{0}$, we have that
\be
\pi_{0}(u_{t}^{\dagger})|\psi\rangle\,=\,\pi_{0}(u_{t}^{\dagger})\left(|x_{A}\rangle\otimes|\psi_{B}\rangle\right)\,=\,|x_{A}\rangle\otimes|\psi_{B}^{t}\rangle,
\ee
which means that $\rho_{x_{A} }=\Phi_{t}^{\sharp}(\rho_{x_{A} })$ is a product state for all $t$ as claimed.
Furthermore, we may consider separable states of the form
\be\label{eqn: schrodinger cat states 2}
\varrho\,=\,\sum_{x_{A}\in\Omega_{A}}\,p_{x_{A}}\,\rho_{x_{A} }
\ee
with $\rho_{x_{A} }$ as in equation \eqref{eqn: schrodinger cat states}, and with  $p_{x_{A}}\geq 0$ such that $\sum_{x_{A}\in\Omega_{A}}p_{x_{A}}=1$, and we immediately obtain that $\varrho_{t}=\Phi_{t}^{\sharp}(\varrho)$ is a separable state for all $t$.
Note that the same statement holds if we start with a mixed state both for the quantum system or the classical system.

The results of this section are summarized in the following theorem:

\begin{theorem}\label{thm: schroedinger's cat}
Let A be a  system described by a totally disconnected, countable groupoid $\mathbf{G}_{A}\rightrightarrows\Omega_{A}$.
Let $B$ be a system described by a countable, connected groupoid $\mathbf{G}_{B}\rightrightarrows\Omega_{B}$.
Let $A\times B$ be the composite system described by the direct product groupoid  $\mathbf{G}_A\times \mathbf{G}_B \rightrightarrows\Omega_{A}\times\Omega_{B}$.
Given a dynamical evolution of $A\times B$ described by a one-parameter group $\Phi_{t}$ of automorphisms of the von Neumann algebra $\valgab$ of the system that can be written according to Eq. \eqref{eqn: heisenberg picture of dynamics}, assume that at the initial time   $t_0$  the system is in the separable state $\varrho$ given by Eq. \eqref{eqn: schrodinger cat states 2}.
Then, the dynamical evolution of $\varrho$ will be a separable state   for all $t$. 
\end{theorem}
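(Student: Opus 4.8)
The statement is essentially a repackaging of the block-diagonal structure of the fundamental representation established in the paragraphs above, so the plan is to assemble those observations into a self-contained argument.

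First I would recall that, because $\mathbf{G}_{A}$ is totally disconnected, the orbits of $\mathbf{G}_{A}\times\mathbf{G}_{B}$ are the sets $\mathcal{O}_{x_{A}}=\{x_{A}\}\times\Omega_{B}$, and consequently the fundamental representation decomposes as $\pi_{0}=\bigoplus_{x_{A}\in\Omega_{A}}\pi_{x_{A}}$ on $\mathcal{H}_{A\times B}=\bigoplus_{x_{A}\in\Omega_{A}}\mathcal{H}_{x_{A}}$, each summand $\mathcal{H}_{x_{A}}=\mathbb{C}|x_{A}\rangle\otimes\mathcal{H}_{B}$ being $\pi_{0}$-invariant (Eq. \eqref{eqn: invariant subspace of fundamental representation}, Eq. \eqref{eqn: decomposition of fundamental representation}). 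In particular, since $u_{t}=\exp itH\in\valgab$, the operator $\pi_{0}(u_{t})$ leaves every $\mathcal{H}_{x_{A}}$ invariant; and since $\pi_{0}$ is a unital $\ast$-representation and $u_{t}$ is unitary, $\pi_{0}(u_{t})$ is a unitary of $\mathcal{H}_{A\times B}$ whose restriction to each block $\mathcal{H}_{x_{A}}\cong\mathcal{H}_{B}$ is again unitary.

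Next I would pin down the evolution of a single product state $\rho_{x_{A}}$. Writing $|\Psi_{x_{A}}\rangle=|x_{A}\rangle\otimes|\psi_{B}\rangle\in\mathcal{H}_{x_{A}}$, invariance of $\mathcal{H}_{x_{A}}$ gives $\pi_{0}(u_{t}^{\dagger})|\Psi_{x_{A}}\rangle=|x_{A}\rangle\otimes|\psi_{B}^{t}\rangle$ for some $|\psi_{B}^{t}\rangle\in\mathcal{H}_{B}$ with $\|\psi_{B}^{t}\|=1$ (by unitarity of the restriction). Hence, using $\pi_{0}(\Phi_{t}(a))=\pi_{0}(u_{t})\pi_{0}(a)\pi_{0}(u_{t}^{\dagger})$,
\[
\left(\Phi_{t}^{\sharp}\rho_{x_{A}}\right)(a)=\langle\Psi_{x_{A}}|\pi_{0}(u_{t})\pi_{0}(a)\pi_{0}(u_{t}^{\dagger})|\Psi_{x_{A}}\rangle=\langle x_{A},\psi_{B}^{t}|\pi_{0}(a)|x_{A},\psi_{B}^{t}\rangle,
\]
which is again a product state on $\valgab=\bigoplus_{x_{A}}\left(\valgxa\,\widehat{\otimes}\,\valgb\right)$, of exactly the type \eqref{eqn: schrodinger cat states}.

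Finally, I would invoke linearity of the dual action: since $a\mapsto\Phi_{t}(a)$ is linear and $\rho\mapsto\rho\circ\Phi_{t}$ is affine on states, $\Phi_{t}^{\sharp}(\varrho)=\sum_{x_{A}\in\Omega_{A}}p_{x_{A}}\,\Phi_{t}^{\sharp}(\rho_{x_{A}})$, the series converging in the weak-$\ast$ topology since $\sum_{x_{A}}p_{x_{A}}=1$ and each $\Phi_{t}^{\sharp}(\rho_{x_{A}})$ is a state. Thus $\varrho_{t}=\Phi_{t}^{\sharp}(\varrho)$ is a convex combination of product states, i.e. separable, for every $t$. I do not expect any genuine obstacle: the only point requiring care is the claim that $\pi_{0}(u_{t})$ is block-diagonal, and this is a direct consequence of the invariance of the subspaces $\mathcal{H}_{x_{A}}$ under all of $\valgab$, which in turn rests on the total disconnectedness of $\mathbf{G}_{A}$. (The hypothesis that $\mathbf{G}_{B}$ is connected is only used to guarantee that a generic closed dynamics is of the inner form \eqref{eqn: heisenberg picture of dynamics}, and plays no further role once that form is assumed.)
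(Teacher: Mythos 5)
Your proposal is correct and follows essentially the same route as the paper: the invariance of the subspaces $\mathcal{H}_{x_{A}}=\mathbb{C}|x_{A}\rangle\otimes\mathcal{H}_{B}$ under the fundamental representation (a consequence of the total disconnectedness of $\mathbf{G}_{A}$) forces $\pi_{0}(u_{t})$ to be block-diagonal, so each $\rho_{x_{A}}$ evolves into a product state and the convex combination $\varrho$ stays separable by linearity of the dual action. Your added remarks on the unitarity of the restricted blocks and the weak-$\ast$ convergence of the series only make explicit points the paper leaves implicit.
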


As a consequence, referring to the cat states of section \ref{sec.1:Introduction}, in the fundamental representation of the groupoid algebra, it is not possible to construct entangled states of the form:
\begin{equation}
|\psi\rangle = \left( |D\rangle \otimes|+\rangle + |A\rangle \otimes |-\rangle \right)
\end{equation}
starting with a state of the form $|\psi_0\rangle = |A\rangle \otimes|-\rangle$ and evolving with a unitary evolution inside the algebra.   
It is important to note that this result depends crucially on the form of the groupoid algebra of a composite system made of a classical and a (possibly) quantum systems, and of its fundamental representation.
Indeed, we see that what is actually needed in the argument outlined above is the fact that the subspace given in equation \eqref{eqn: invariant subspace of fundamental representation} is invariant with respect to the fundamental representation $\pi_{0}$.
This instance is realized even if the groupoid $\mathbf{G}_{A}$ is a countable groupoid which is totally disconnected but presents non-Abelian isotropy subgroups.
In this case, the algebra $\valga$ is no longer Abelian (the system $A$ is thus non-classical) but the argument given above still applies.

\section{Conclusions}

In this contribution, we exploited Schr\"{o}dinger's famous {\itshape Gedankenexperiment} with cats as a departure point to analyse the issue of composition of classical and quantum systems in the groupoid formalism introduced recently.
The first step was to characterize what is a classical system in the groupoid formalism mentioned before.
In the context of countable groupoids, we found that  a classical system in the sense of the quantum logic of  Birkhoff and von Neumann is necessarily associated with a totally disconnected groupoids with Abelian isotropy subgroups, and viceversa.
Then, we asked what happens to a product state of a composite system made up of a classical system and a quantum one.
The arguments presented in this paper show that, under the appropriate technical conditions expressed in Thm. \ref{thm: schroedinger's cat}, the direct composition of a classical system with a quantum one prevents a family of separable states of the composite systems  to evolve into entangled states by means of unitary evolutions.  
As said before, if the system $A$ is classical in the sense specified in subsection \ref{sec:classical}, the content of Theorem \ref{thm: schroedinger's cat} is a particular case of Raggio's theorem.
However, when the isotropy subgroups of $A$ are non-Abelian, then the algebra $\valga$ is non-Abelian and Raggio's theorem does not apply, while Theorem \ref{thm: schroedinger's cat} does apply, and allows us to conclude that there are separable states of the composite system that will remain separable under all possible unitary dynamics of the composite system.
Therefore, even if the cat is non-classical, there still are particular inital separable states of the composite system for which Schr\"{o}dinger's disturbing evolution is precluded.
These type of states may be called ``pseudo-classical'' states, and the results of this section  may be a  hint to pursue a more general investigation of  pseudo-classical states of composite systems.
In this direction, Theorem \ref{thm: schroedinger's cat} represents a first step stating clearly that pseudo-classical states actually exist.
In the context of quantum information theory, a similar results has already been obtained \cite{G-B-2003}.
In this case, both system A and B are described by connected groupoids (the groupoids of pair of n elements giving rise to matrix algebras), and it is proved that there are mixed separable states that remain separable under every unitary dynamics.
The set of such states (which are pseudo-classical in the sense specified above) is referred to as the Gurvits-Barnum ball, and it would be interesting to understand the proof given in \cite{G-B-2003} in the context of the groupoid picture.
All these issues will be dealt with in future works.

\addcontentsline{toc}{section}{References}

\end{document}